\newtheorem{prop}{Proposition}
\title{Traffic-aware Two-stage Queueing Communication Networks:  Queue Analysis and Energy Saving}
\begin{document}

\author{
        Nan Qi, Nikolaos I. Miridakis, 
 Ming Xiao,  ~\IEEEmembership{Senior Member,~IEEE,}               Theodoros A. Tsiftsis, \IEEEmembership{Senior Member,~IEEE,} 
                \\Rugui Yao,  \IEEEmembership{Senior Member,~IEEE,}
         and Shi Jin, \IEEEmembership{Senior Member,~IEEE}   
\thanks{
Nan Qi is  with the Key Laboratory of Dynamic Cognitive System of Electromagnetic Spectrum Space, Ministry of Industry and Information Technology, Nanjing University of Aeronautics and Astronautics, Nanjing, China, 210016. Nan Qi is also with National Mobile
Communications Research Laboratory, Southeast University, Nanjing 210096,
P. R. China (e-mail: nanqi.commun@gmail.com).

 N. I. Miridakis is with the School of Electrical and Information Engineering and the Institute of Physical Internet, Jinan University, Zhuhai 519070, China. and with the Department of Electrical and Electronic Engineering, University of West Attica, Aegaleo 12244, Greece (e-mail: nikozm@uniwa.gr).

 Ming Xiao is with the School of Electrical Engineering of KTH, Royal Institute of Technology, Stockholm, Sweden (e-mail: mingx@kth.se). 
 
T. A. Tsiftsis is with the School of Intelligence Science and Engineering, Jinan University, Zhuhai 519070, China (e-mail: theo\_tsiftsis@jnu.edu.cn).
  
Rugui Yao is  with the School of Electronics and Information,  Northwestern Polytechnical University, China (e-mail: yaorg@nwpu.edu.cn)  
 
Shi Jin is with the National Mobile Communications Research
Lab, Southeast University, Nanjing, China (e-mail: 
jinshi@seu.edu.cn).

This work was supported in part by the National Natural Science Foundation of China (No.  61801218, 61871327).
Part of this work has been presented in  IEEE WCNC workshop 2019 \cite{WCNCW}. 
}
}

\maketitle

\begin{abstract}
To boost energy saving for the general delay-tolerant IoT networks, a two-stage, and single-relay queueing communication scheme is investigated. Concretely, a traffic-aware $N$-threshold and gated-service  policy are applied at the relay. As  two fundamental and significant performance  metrics, the mean waiting time and long-term expected power consumption  are explicitly  derived and related with the queueing and service parameters, such as   packet arrival rate,  service threshold and channel statistics. Besides, we take into account the  electrical circuit energy consumptions when the relay server and  access point (AP) are in different modes and  energy costs for mode transitions, whereby the power consumption model is more practical. The expected power minimization problem under  the mean waiting time constraint  is formulated. Tight closed-form bounds are adopted to obtain tractable analytical formulae  with less computational complexity. 
The optimal energy-saving service threshold that  can flexibly adjust to packet  arrival rate is determined. In addition, numerical results reveal that: 1) sacrificing  the mean waiting time not necessarily facilitates power savings; 2) a higher arrival rate leads to a greater optimal service threshold; and 3) our policy performs better than the current state-of-the-art.

\end{abstract}

\begin{IEEEkeywords} Delay tolerant IoT relaying networks,   mean waiting time, $N$-threshold and gated policy, two-hop queueing,  and traffic-aware power saving.
\end{IEEEkeywords}

\IEEEpeerreviewmaketitle

\section{Introduction} 
\subsection{Background} 
As wireless networks consecutively evolve, 
the Internet of Things (IoT), e.g., Machine-to-machine (M2M) and device to device (D2D) networks,  is emerging as a promising network formation  to fulfill the rising wireless network traffic demand worldwide \cite{what}.  
     IoT  networks  enable  devices  built-in sensors to be connected to an IoT platform and integrates packets from  different devices, therefore,    devices can communicate with each other. Low-cost and low-energy-consumption   IoT radio sensors find  wide  applications, e.g.,   smart farm/industrial monitor.   
With a rapid increase in the number of  connected devices ($25$ billion by $2020$ \cite{BR}),  
the energy consumption  in the overall IoT networks  dramatically increases and becomes a great concern.  It is demonstrated that massive  connected devices   consume  
more power than   conventional cellular and Wi-Fi networks \cite{IOT}. IoT devices are mostly small-scale and  powered by energy-constrained batteries.  {Therefore,  a low-power communication policy  is crucial for  IoT networks. }

  { Note that IoT end-users may be placed in remote or human hard-to-reach areas, where there is no reliable transmission link. As such, small-scale short-range IoT devices are unable to transmit over a 
long distance (e.g., several kilometers), which means that there are difficulties in  communicating with a distant AP.}    {  To enlarge the communication region,   relay deployment   has  been widely proposed,} especially for end-users with small transmitting power or being blocked by physical obstacles \cite{IOT}-\cite{TMRY1}. 
IoT relays  gather data from users and  relay it to destinations. 
 As one example,  M2M  relaying   
has been proposed as a  heterogeneous
architecture for  the European
Telecommunications Standards Institute (ETSI)
M2M and IEEE $802.15.4 $/$802.16$p IoT architectures \cite{IOT}.  Demos on mobile relay architectures for low-power IoT devices were presented in \cite{Demo}, \cite{giov}.   
The authors in \cite{Demo} presented a  relay-assisted video streaming application.  In  \cite{giov}, the authors investigated  a   5G IoT scenario  for   sensor data collection via drone relays. Therein, a relaying system was applied  
for environmental monitoring and agricultural applications.
 Compared with the ordinary 
IoT devices, IoT relays and access points    { (APs)}  have more resources   but  are more energy-consuming. That is, wide relay utilizations   potentially  lead  to  remarkably increasing global energy consumption \cite{ILC}.

Several prior  research works    investigated 
energy saving procedures that allow APs to dynamically  
adjust transmission functionalities according to the traffic demand \cite{ai}, and demonstrated that such an  adaptation mechanism allows a great energy saving especially in low traffic scenarios.   The  investigations   motivate  the researchers to enhance power savings at  IoT relays/APs  by applying similar methods. That is, to exploit the potential of energy saving by tracing  the traffic variations and adapting  the states of  relays/APs  accordingly.  
Inevitably, under such  a scheme,   transmitters may suffer from a  disruption in their connectivity to   servers (e.g., relays or APs). As such,   an extra  delay is incurred. Fortunately, this  is allowable for    services without real-time constraints.  
 In practice,   the delay 
tolerant network  (DTN) scheduling 
  has recently been recognized as an appropriate  technique to balance the  available  resources    and   communication tasks   without  
compromising the perceived Quality
of the Service (QoS) \cite{ILC}-\cite{CA}. There have 
been a series of research efforts on delay-tolerant IoT networks \cite{Bike}-\cite{M2M2}. 
In \cite{Bike}, the authors proposed a delay-tolerant architecture  for internet of public  bikes  by applying  reliable buffer management. 
Moreover,   an optimized delay-tolerant approach was designed for integrated RFID   IoT networks \cite{RF}, while a  
standard-based M2M platform  was configured to collect data 
from sensors with strict energy constraints \cite{M2M2}. In this paper, we focus on the intermittent transmission scheme in relay-assisted  DTN networks for the energy-saving sake without violating the delay constraint. 

\subsection{Related Works} 

The relevant researches on IoT-relaying networks can mainly be divided into two categories: $1)$ network architecture and routing protocol designs \cite{Demo}; $2)$   resource management, including relay number  determination \cite{AA}, \cite{DD},  placement \cite{CA}, device-relay-channel association \cite{AA}.  In \cite{AA}, \cite{DD},   optimal relay  placements and device-relay-channel associations were determined to  achieve energy efficient  uplink transmissions. The relay placement problem under  a simplified  delay constraint was investigated in \cite{CA}. 
Nevertheless, the above relay research works did not consider packet transmitting scheduling in time domain. 

 
Quite recently,  traffic-aware   scheduling problems in conventional cellular networks   have grasped considerable attentions  \cite{JKHL}-\cite{LS}. By configuring the lightly-loaded relays  into their sleeping  state,  the  working duration of   relays/APs is reduced, bringing   the benefit of a lowered basic electrical  consumption.  
In \cite{JKHL}, an  interesting aerial flying network infrastructure was envisioned, which enables the network topology to be  dynamically reconfigured   to match the users' traffic demands. Similar work can be found in   \cite{ILC}, \cite{LS} and \cite{JWSZ}, where   nodes sleeping and user transmission scheduling  were investigated, aiming at diminishing the energy consumptions by exploiting both temporal and spatial variation of traffic loads.   
In addition, some important and pertinent research works  on   traffic-aware communication networks were performed in \cite{WJ2016}, \cite{WJ2013}, where an 
 $N$-threshold base stations service rule has been applied to one-hop transmissions. The authors illustrated that one optimum threshold  exists in terms of the energy-delay trade-off.  However, the  design is infeasible in the relay-assisted transmission scheme.
 
Inherently, traffic-aware relay-assisted two-stage transmissions implemented by adaptively configuring the relays/APs' service states  can be  modelled as a two-stage queueing system. 
To the best of our knowledge, very few researches are performed for  such network settings. 
 The most relevant works are \cite{DB}-\cite{KT}, where   two-stage M/G/$1$ queueing systems have been studied.   An $N$-threshold relay service rule was also applied.  
Although   \cite{DB}-\cite{KT} presented significant  insights for two-stage queueing networks,   they  investigated   the scenarios where \textit{the first stage was in the batch-service mode and cannot apply to wireless relaying system.  }
In relaying wireless communication networks, however,  users  are mainly  served  in an individual mode during  each hop.   Some   research works on  two-stage queueing networks with individual-service mode are \cite{CA}  and \cite{JL}. However, they supposed a  relay always-on service policy, and failed to take into account the power-saving relevant queueing service  design.

The traffic-aware $N$-threshold transmission scheme  inherently differentiates from the previously published literature  \cite{WJ2016}-\cite{JL}. In particular, the mean waiting time, as a key performance  metric,  is typically composed of the queuing delay,  transmission delay as well as processing 
delay. Rigorous mathematical analysis for the mean waiting time along  with the overall  power consumption  that includes the electrical consumption has not been investigated yet. 
The mean waiting time, in particular, cannot be simply obtained from the widely adopted Little's Law, since it is closely related with the service stage when the packet arrives.   Furthermore, the expected power consumption is related to the states of the relay and AP along with the time span that they remain in those states. Hence, the power  consumption   model is also different from those in the published literature \cite{WJ2016}-\cite{JL}.

\subsection{Contributions} 
 We focus on seeking low-power solutions    while guaranteeing  the transmitters' delay performance for delay-tolerant networks.  One fundamental and critical issue is to investigate how   two key  metrics, i.e., energy consumption and mean waiting time are related with   packet arrival rate, queueing service rate,  the service threshold, \emph{etc}.    
    { To the  best of our knowledge,   relay-aided two-stage queueing communication  service   is still an open issue and  different from the state-of-the-art.}  Three critical and technically challenging issues are:  1)   derivation  of the expected queuing length in different service  stages;   2) formulation  of the mean waiting time; and 3) determination of energy-saving service threshold that adapts to various mean waiting time tolerance and packet arrival rates. 

Specifically, our main contributions are listed as follows:  

\begin{description}

\item[$\bullet$] A more general and practical two-hop queueing system is  investigated. Two random processes are considered, including the ergodic fading channels and Poisson arrival process. Additionally, the power consumption model is more practical in the sense that we take into account the electrical circuit energy consumptions when nodes are in different modes as well as power costs for mode transitions. 
 
\item[$\bullet$] As a key metric, the  mean waiting time is first  derived.  The mean waiting time hence can be measured mathematically.  Based on that, the long-term expected power consumption is formulated. Then, how they are affected by the system parameters, namely   packet arrival rate,   channel statistics, and relay service threshold     { is}  mathematically  analysed.

\item[$\bullet$]  By adopting tight performance bounds  based on Jensen's inequality,  complicated  manifold integral computations are  avoided.    {  This    
 method  enables} our results to be   derived   in tractable closed-formulae, and  evaluated quite easily. Hence, the overall computational complexity is greatly reduced.

\item[$\bullet$] The optimum service threshold that can be flexibly adjusted to   packet arrival rate is  determined such that   the long-term expected power consumption  is minimized without violating the delay constraint.  

\end{description}

The rest of the paper is organized as follows. In Section II, we present the system model. Queueing behaviour analysis is given in Section III. In Section IV, the considered power minimization problem is formulated and analyzed. Problem approximation and determination are  provided in Section V. The analytical and simulation results are presented in Section VI. Finally, Section VII concludes this paper.

   \begin{table}[htbp]
\caption{Notations}
\centering 
\begin{tabular}{c p{150pt}}
\hline
\textbf{Notation}                                                                                                             & \textbf{Description}            \\ \hline\hline 
$N_0$                                                                                                                       & the number of accumulated packets at the beginning of the first service stage (FSS)      \\ \hline
$ N_1$ & the number of arrivals during $N_0$ packets' FSS  \\ \hline
$ N_2$ &   the number of accumulated  packets  at the end of  the second service stage   \\ \hline
$N_0(z)$, $N_1(z)$, $N_2(z)$                                                                                                                 & PGFs of $N_0$,  $N_1$, $N_2$           \\ \hline
$\mathcal{L}_{{T}_1}(\theta)$, $\mathcal{L}_{{T}_2}(\theta)$ & Laplace-Stieltjes transforms (LSTs) of  ${T}_1(t)$, ${T}_2(t)$      \\ \hline
\end{tabular}\label{Parameter_}
\end{table}

\section{System Model}  
As depicted in Fig. \ref{mobile_relaying_systemmodel},  we consider a two-stage and single-relay queueing IoT  communication system consisting of    users, a relay station and an AP that is connected to the Internet. The users can be  distributed   sensors, actuators, cameras for smart farm/industrial/forestry applications. The AP can be a   data integration point in the practical IoT scenarios.   
     {Besides, we assume a central-controlled network framework. The central node can be the relay  or one dedicated node that  is capable to communicate with each transceiver in the network. Once a packet is generated at one specific user, the user notifies the central node of its existence via a short message (e.g., beacon message).  
The   packet arrivals  at the central node follow the Poisson distribution with average rate $\lambda$    {(measured in packets/ms)}.  Note that the value of  $\lambda$ varies with the statistic  traffic load. }

     {Uplink data transmissions are considered, i.e. the users intend to upload their packets to the AP. However, }  the direct wireless communication links between the users and AP are unavailable due to the long-distance, severe fading or physical obstacles. Additionally, no  Internet connectivity is 
available. As such, a relay station is applied to wirelessly collect data from sensors and forward  them to the AP.    Specifically, a two-stage service is conducted to deliver the  arriving packets to the AP via the  the relay. The relay is equipped with a single antenna and acts as a server in the queueing system. The AP-Internet transmissions   can be realized with high-capacity wired  cables. We only elaborate on the two-stage wireless transmissions, while AP-Internet transmissions are ignored in the paper.

     {
The packet queue is managed by the central node.   
   In what follows, a sub-cycle is first introduced,  which is composed  of   two service stages, i.e.,   the first service stage (FSS) and  second service stage (SSS), respectively.  In the FSS, the users transmit their packets to the relays sequentially.  In the SSS, the packets received at the relay are successively forwarded to the AP.   }
 We take an arbitrary sub-cycle below as an example to  illustrate the service model.  Further, the definition of a regeneration cycle is introduced, which consists of several consecutive sub-cycles.    {Note that    the   notations are  listed  in Table \ref{Parameter_}.}
  
\begin{figure}[!htpb]
\centering
\includegraphics[width=0.5\textwidth]{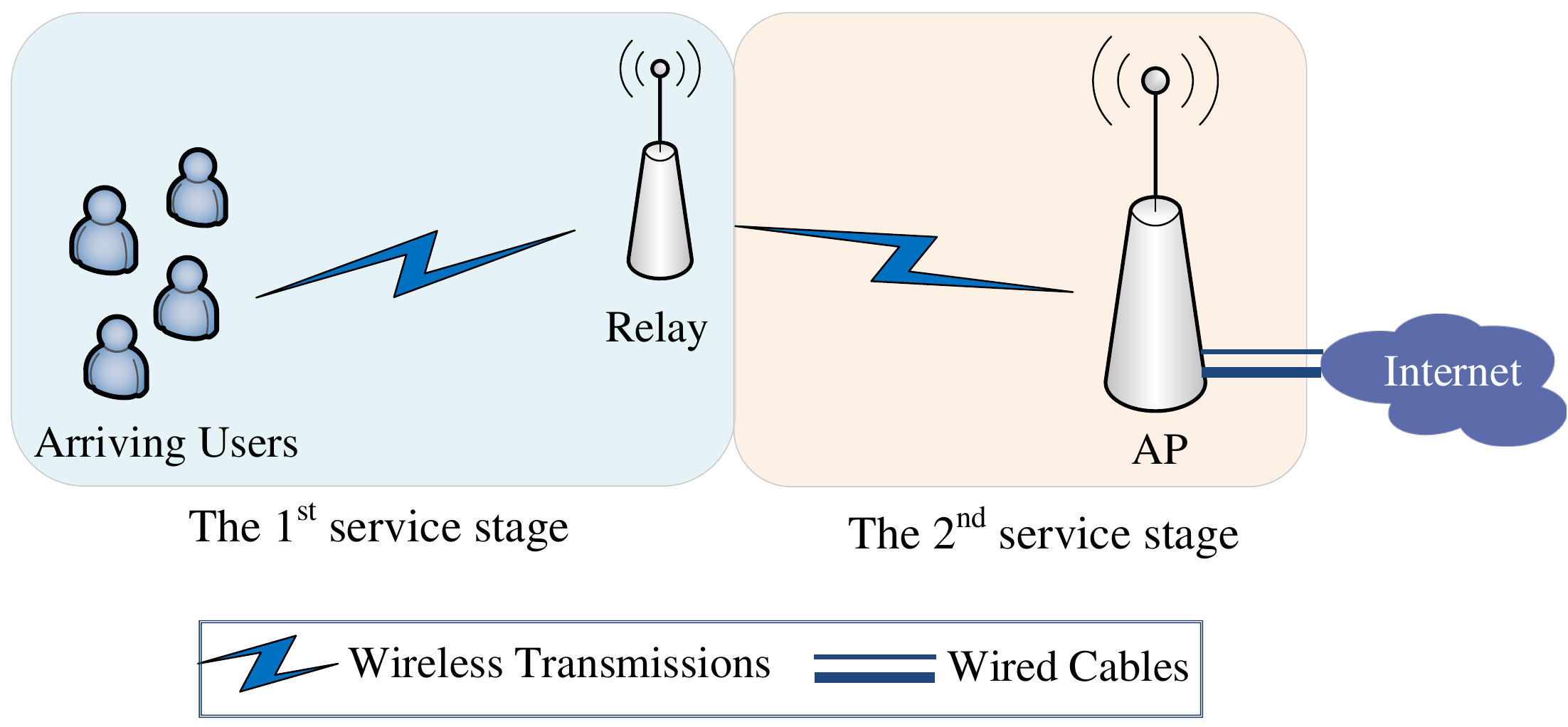}
\caption{Two-stage queueing service model.}\label{mobile_relaying_systemmodel}
\end{figure}

\begin{figure}[!htpb]
\centering
\includegraphics[width=0.5\textwidth]{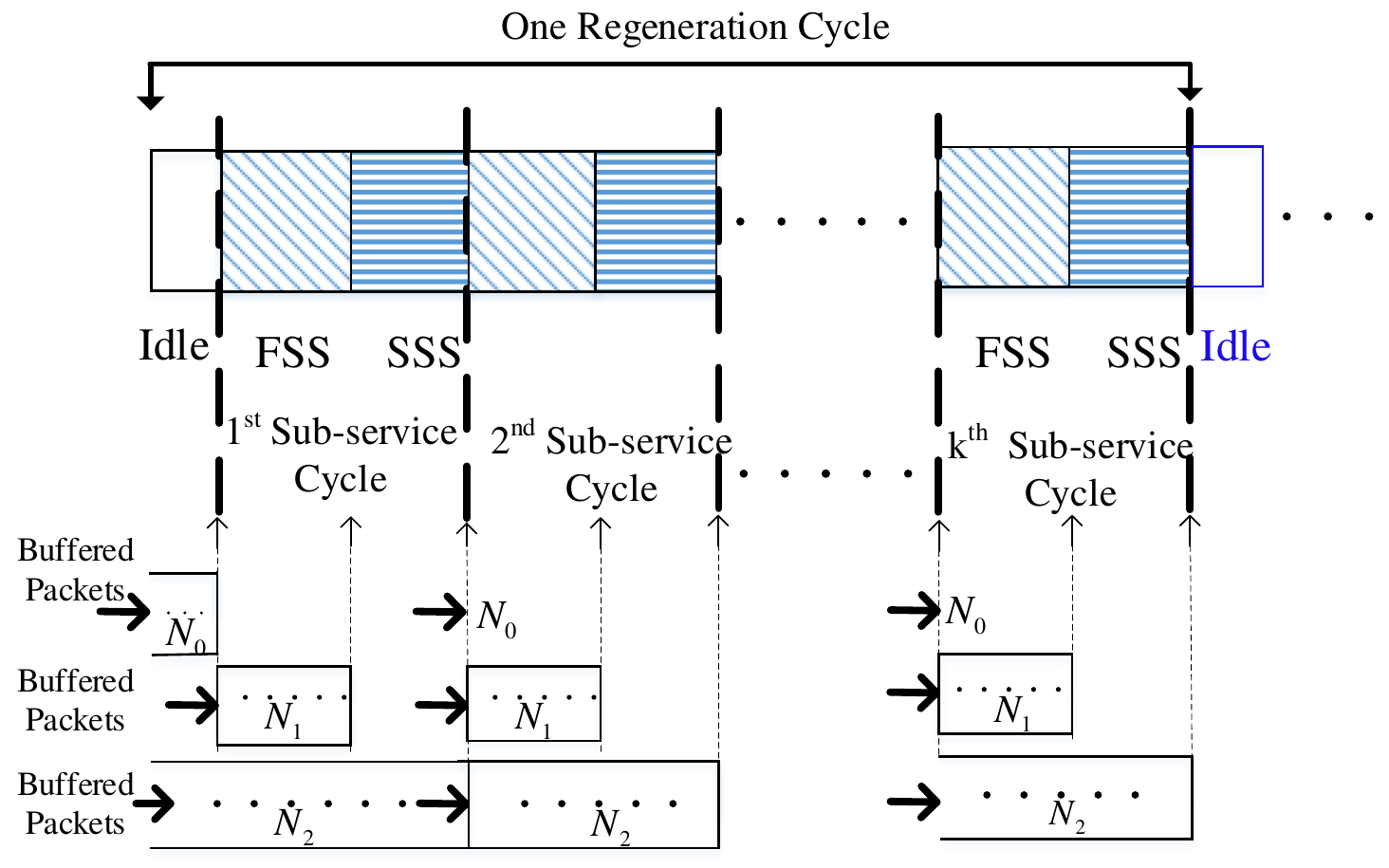}
\caption{One Regeneration Cycle Model Illustration.}\label{Regeneration_Cycle}
\end{figure}  
  
\subsection{The First Service Stage (FSS)}  \label{FSS}

The $N$-threshold policy is applied.      {   The detailed policy is illustrated as follows.}       {Let $ N_2$  represent  the number of accumulated  packets  at the end of  the second service stage (which will be illustrated in Subsection \ref{SSS}).}   
         {Only when  $N_2>0$, the relay server switches immediately into the receiving mode at the end of the SSS\footnote{ When the number of accumulated packets satisfies a pre-set rule, the relay    sends the packet-transmitting-request to the  waiting users. After that, the relay immediately switches  into the receiving mode. }.  In case that $N_2=0$, the relay server remains in the listening  mode  till  the number of newly arrived  packets reaches the predefined threshold $N$, i.e.,} 
         { 
\begin{equation}
N_0  = \left\{ \begin{array}{l}
 N_2 ,\quad \text{if}\;N_2  > 0, \\ 
N,\quad \;\;\text{otherwise}. \label{NEQ_0}\\ 
 \end{array} \right.
\end{equation}}
   { Based on our definition of $ N_0 $, we can observe that $ N_0 $
is inherently a random variable, since $N_2$  brings randomness to it.}  
Besides, note that during the FSS,  the AP remains listening.
   { Specifically, 
in the listening mode, a transceiver is active and ready but not currently  receiving or transmitting packets.   Listening to the channel and  executing clear channels assessment are the main activity in the listening mode}\footnote{The AP and relay is not completely shut off   because  it may be quite challenging to entirely
turn off the AP and then require it  switch to the receiving mode immediately for the next second service stage. Nevertheless,    listening   
mode control within a short time span is feasible, as investigated in   \cite{PF}.}.

 Additionally, a  gated-policy is adopted at the relay. Specifically, only the   packets that are already present  (before the FSS starts) will be served in the current sub-cycle, while those arrive during the current FSS have to wait for being served in the next sub-cycle.

 In the MAC layer, a schedule-based MAC protocol, i.e., time division multiple 
access (TDMA) is applied to coordinate  packet transmitting among the users. Specifically, the packets are individually served by the relay server in the first come first served (FCFS) rule. Note that packet collision is avoided under the TDMA-based MAC protocol.  In the FSS, the active relay keeps receiving   packets, decoding  and storing them in its data buffer. It is assumed that the buffer size is large enough to store the packets, which is reasonable since in practice, only a limited number of packets are accumulated in a sub-cycle; otherwise, the waiting time will be intolerable.

In the FSS, the service rate is
 \begin{equation}
C_1 (h_{1} ) = B\log \left(1 + \frac{{|h_{1} |^2 p_1 }}{{\sigma _{1}^2 }}\right), \label{C1} 
\end{equation}
where  $B$ is the bandwidth;        {$p_1$ is the data transmitting power of the packets};  $\sigma_1^2$ is the noise power; $h_{1}\sim \mathcal{CN}(0, \sigma _{{h_{1}}}^2)$ is the  channel gain;
 $ \mathcal{CN}(m,v)$ denotes a complex Gaussian random variable with $m$ and $v$ standing for the mean value and variance, respectively; 
 $|h_{1} |$ is the amplitude of $h_{1}$ and follows the Rayleigh distribution;  $\sigma _{{h_{1}}}^2$ is the variance of   Rayleigh distribution.  Rayleigh  fading channels are assumed to be independent and identically distributed (\emph{i.i.d.}) for mathematical tractability and drawing  insights.

      {The $i$th arriving packet is indexed by packet  $i$}.       { The service time span for individual packet (or individual service time span  for short), denoted by $\{{T_{1,i} },\; i=1,\,2\;\cdots,  \}$}, in the FSS  are assumed to be a stationary \emph{i.i.d.}  process with a distribution function, ${T_1}(t)$. Note that such an assumption can be realizable in practice. 
One feasible way is to flexibly match  each packet's  transmitting power  with the channel stochastic parameters, such that the service rate   is a stationary \emph{i.i.d.} process. 
 It is further assumed that  the length of each individual packet is fixed and denoted    {as $l$}\footnote{  If the length is not fixed, we can split longer packets into shorter ones such that the length of each packet is the same.}.    The individual service time is obtained as
\begin{equation}
{T_{1} } =l/{C_1 (h_{1} )},\label{T11}
\end{equation}
while its distribution function, denoted as ${T_{1} }(t)$, is  
\begin{equation}
{T_{1} }(t) ={\Pr}\Big\{\frac{l}{C_1 (h_{1} )}<t\Big\}.
\label{T_i}
\end{equation}

Further, by combining with \eqref{C1},  ${T_{1} }(t)$ can be presented as follows:
\begin{equation}
{T_{1} }(t) =\exp \Big( - \frac{{\big(\exp (\frac{l}{{Bt}}) - 1\big)\sigma _1^2 }}{{\sigma _{h_1 }^2 p_1 }}\Big).
\label{T2_i}
\end{equation}

\subsection{The Second Service Stage (SSS)} \label{SSS} 
Once  the FSS is completed, the   SSS starts and the AP switches from   the listening mode to the receiving mode.
 All the packets stored in the relay's buffer will be forwarded to the AP according to the FCFS rule.  The individual service time span in the SSS, denoted by $\{{T_{2,i} },\; i=1,\,2\;\cdots,  \}$, is assumed to be a stationary \emph{i.i.d.} process with a  distribution function ${T_2}(t)$.  
 Note that   new arrivals during the SSS will wait  for the next ``FSS+SSS"  sub-cycle   service.
 
The service rate in the SSS, denoted as $C_2 (h_{2})$, can be given by
 \begin{equation}
C_2 (h_{2} ) = B\log \left(1 + \frac{{|h_{2} |^2 p_2 }}{{\sigma _{2}^2 }}\right),\label{C2} 
\end{equation}
where $p_2$ is the data transmitting power at the relay;  $h_{2}$ and  $\sigma_2^2$ are defined respectively similar to  $h_{1}$ and $\sigma_1^2$, and correspond  to the parameters for the relay-AP channel  in the SSS\footnote{${T_{2} }$ and ${T_{2} }(t)$ can be directly obtained  by substituting index ``$1$" in \eqref{T11} and  \eqref{T_i} with ``$2$".  The details are omitted here.}    {.}

Correspondingly, the individual service time in the SSS can be  obtained as
\begin{equation}
{T_{2} } =l/{C_2 (h_{2} )}.\label{T22}
\end{equation}
         {$\{{T_{2,i} },\; i=1,\,2\;\cdots,  \}$ is assumed to be a stationary \emph{i.i.d.} process with a  distribution function ${T_2}(t)$. The detailed expression for ${T_2}(t)$ can be obtained by changing ``1"  in  \eqref{T2_i} 
  into ``2". We omit its expression here. } 

\subsection{The Regeneration Cycle Model}

It is noteworthy that intermittent wireless transmission scheme is  applied. Specifically, at  the  end of the SSS, depending on the value of $N_2$, 
 the relay server decides whether it starts the  next   sub-service  cycle  immediately or not.          {As described in Section \ref{FSS}, if  $N_2\geq 0$, the next  sub-service cycle will start, resulting in   consecutive sub-service cycles; otherwise,  the sub-service  cycle will be followed  by an idle period.  }

As depicted in Fig.~\ref{Regeneration_Cycle}, a regeneration cycle is introduced and defined as the serving process between two idle states, consisting of an idle period followed by several consecutive sub-service  cycles. The number of sub-service cycle  is a random variable, denoted as $k$. Statistically, $k$ also represents the number of   sub-cycle services before the service turns into   idle mode,  indicating that $k$ follows the geometric distribution. Hence, we have  
\begin{equation}
 \mathbb{E}\{k\}=1/{\pi_0}, \label{kpi0}
\end{equation}
where $\mathbb{E}\{\cdot\}$ is the expectation operator;
 $\pi_0={\Pr}\big\{N_2=0\big\}$, and equals to  the probability that  the relay switches from the transmitting mode to  the listening mode.

For the above queueing service model,   we will analyse its key measures in the sequel, including the regeneration cycle time span, queue lengths at different service  stages, based on which the mean waiting time is  gradually derived. 

\section{Regeneration Cycle Measures Analysis}

\subsection{The Regeneration Cycle Time Span}

   {
Denote the number of  packet arrivals  in one complete regeneration  cycle as $\Gamma$.     
$\mathbb{E}\{ \Gamma \}$ is the expected number of arrived packets in one complete regeneration  cycle. In what follows, we respectively calculate the expected number of packet arrivals during the   idle, FSSs and SSSs.    }  

   {First,   note that in the idle mode, the relaying service will not start until there are $N$ packets waiting in the queue.  Additionally,
 there are in total  $\Gamma$ arrived packets, each taking an average of   $\mathbb{E}\{ T_1 \}$  and  $\mathbb{E}\{ T_2 \}$ seconds for the FSS and SSS,  respectively.  According to Wald's equation \cite{Book_queue}, it takes an average of  $\mathbb{E}\{ \Gamma \} \mathbb{E}\{ T_1 \}$  and $\mathbb{E}\{ \Gamma \} \mathbb{E}\{ T_2 \}$ seconds  for serving an average of  $\mathbb{E}\{ \Gamma \}$ packets in the FSSs and SSSs  of one complete    regeneration  cycle,  respectively. Based on the delay cycle property in Section $1.2$ in  \cite{Book_queue}, we achieve  that   the total  expected numbers of packet arrivals in the  idle stage, FSSs, and SSSs  are $N$, $\lambda  \cdot\mathbb{E}\{ \Gamma \} \mathbb{E}\{ T_1 \}$ and $\lambda  \cdot\mathbb{E}\{ \Gamma \} \mathbb{E}\{ T_2 \}$, respectively (for detailed derivations, please refer to Appendix \ref{Wald}).    $\mathbb{E}\{ \Gamma \}$  can be given by}
\begin{equation}
\mathbb{E}\{ \Gamma \}  = N + \lambda  \cdot \mathbb{E}\{ \Gamma \} \mathbb{E}\{ T_1 \}  + \lambda  \cdot \mathbb{E}\{ \Gamma \} \mathbb{E}\{ T_2 \} \label{E_Gamma}.
\end{equation} 

   {Let $\mathbb{E}\{ T_0 \}$ represent the expected idle duration. Since the packet arrival follows the Poisson distribution, $\mathbb{E}\{ T_0 \}$ is related with $N$  as follows:
\begin{equation}
\mathbb{E}\{ T_0 \}=\frac{N}{\lambda}.\label{ET0}
\end{equation}
}
 
From \eqref{E_Gamma},  we obtain that the closed-form of $\mathbb{E}\{ \Gamma \}$  can be expressed as 
\begin{equation}
\mathbb{E}\{ \Gamma \}  = \frac{N}{{1 - \lambda \mathbb{E}\{ T_1 \}  - \lambda \mathbb{E}\{ T_2 \} }}. \label{E_gamma}
\end{equation}

   {Let us denote  $X_i=|h_{i}|^2$, $Y_i=\frac{l}{C_i (X_i)}$ ($i=1,\;2$), for convenience.  Since $X_i=|h_{i}|^2$ follows   
an exponential \emph{p.d.f.} \cite{Book_queue}, such that
\begin{equation}
g_{X_i}(x) = \frac{1}{{\sigma _{h_i}^2}}\exp ( - \frac{x}{{\sigma _{h_i}^2}}).\nonumber
\end{equation}
Further, $\mathbb{E}\{ T_i \}$  ($i\in\{1,2\}$) in \eqref{E_Gamma} and \eqref{E_gamma} is  calculated as} 
\begin{equation}
\mathbb{E}\{ T_i \}=\int_0^{ + \infty } {\frac{l}{{C_i (x)}}} g_{X_i } (x)dx\footnote{Note that in practice, $X_i=|h_i|^{2}>0$. For mathematical computation, we bound $x$ as $x\geq\delta$, where $\delta$ takes a considerably small value, e.g., $10^{-6}$.}. \label{eTI}
\end{equation}

Let $T_{RC}$ represent the  time span of one  regeneration cycle.   $\mathbb{E}\{ T_{RC}\} $ can be derived as    {(for detailed derivations, please refer to Appendix \ref{Wald})}
\begin{equation}
\mathbb{E}\{ T_{RC}\}  = \frac{\mathbb{E}\{ \Gamma \}}{\lambda}. \label{E_g2amma}
\end{equation}

    { Direct derivations of the mean waiting time are considerably challenging. Note that the mean waiting time can be obtained by taking the derivatives of the LST of the distribution of the waiting time.   The LST of the distribution of the waiting time can be achieved from  the probability generating functions (PGFs) of the  queue length. }

    {In what follows, we will respectively formulate the PGFs  of the FSS and SSS queue lengths, based on which the mean waiting time is further deduced.} 

\subsection{PGFs of the FSS and SSS Queue Lengths}

     {Let $ N_1$ represent the number of arrivals during $N_0$ packets' FSS, and 
$N_0$                                                                                                                         represent  the number of accumulated packets at the beginning of the first service stage (FSS). 
 Let  
$N_0(z)$, $N_1(z)$, $N_2(z)$                                                                                                                 represent the  PGFs of $N_0$,  $N_1$, $N_2$.  In the following, the deductions of  $N_0(z)$, $N_1(z)$ and $N_2(z)$  are illustrated first. On this basis,   $\pi_0$  is   derived.}
 
     {According to the relations between the LST and PGF of the Poisson arrival process  (interested readers are referred to   Page $5$ of \cite{Book_queue}),  
  $N_0(z)$, $N_1(z)$ and $N_2(z)$   can be   achieved from 
the  LST of the distribution function  ${{T}_i}(t)$, $i=1,\; 2$. In this subsection, we first derive the  LSTs. }
 
Let us denote  $Y_i=\frac{l}{C_i (X_i)}$ ($i=1,\;2$), for convenience. Let $g_{Y_i}(y)$ represent the probability density function (\emph{p.d.f.}) of $Y_i$.      {By definition of LST (for interested readers, please refer to Eq. (1.4a) in \cite{Book_queue}), we obtain the  LST of the distribution function  ${{T}_i}(t)$ as}  
\begin{equation}
\mathcal{L}_{{T}_i}(\theta)=\int_0^{ + \infty } {\exp ( - \theta y)} g_{Y_i} (y)dy. \label{eq2}
\end{equation}
$Y_i=\frac{l}{C_i (X_i)}$ monotonically decreases with $X_i$, then $g_{Y_i} (y)$ is  obtained as  $
g_{Y_i} (y) = \frac{g_{X_i}(x)}{|\nabla _x y|}$ \cite{pdfconversion},
where ${\nabla _x y}$ is the derivative of $y$ \emph{w.r.t.} $x$.
Finally, \eqref{eq2} can be rewritten as 
\begin{equation}
\mathcal{L}_{{T}_i}(\theta)=\int_0^{ + \infty } {\exp \Big( - \theta \frac{l}{C_i (x)}\Big)} {g_{X_i}(x)}dx. \label{e2q2}
\end{equation}

Note that the PGFs of the number of packet  arrivals during one packet's   first and second service stage   can be respectively  calculated as  ${\cal L}_{T_1 }  (\lambda  - \lambda z)$ and ${\cal L}_{T_2 }  (\lambda  - \lambda z )$ (see page $5$ of \cite{Book_queue})). Then, during the overall  individual service time of $N_0$                                                                                                                        accumulated packets,  the PGFs of the number of packet  arrivals can be given as 
\begin{equation}
N_1 (z) = \sum\limits_{i = 0}^{ + \infty } {\Pr \{ N_0  = i\} } \Big\{{\cal L}_{T_1 } (\lambda  - \lambda z)\Big\}^i.\label{N10Z}
\end{equation}
   { According to the definition of PDF, we rewrite $N_1(z)$ as }
\begin{equation}
N_1 (z) = N_0 \Big({\cal L}_{T_1 } (\lambda  - \lambda z)\Big).\label{N1Z}
\end{equation} 
Note that \eqref{N1Z} also reveals the relationship between $N_1(z)$ and $N_0(z)$.

Further, by  definitions of $N_2$ and $N_0$,  the  relations between  $N_2(z)$ and $N_0(z)$  are 
         {\begin{align}
N_2(z) &= N_0 \Big({\cal L}_{T_1 } (\lambda  - \lambda z)\Big)\cdot N_0 \Big({\cal L}_{T_2 } (\lambda  - \lambda z)\Big), \label{N2Z}\\
N_0(z) &= N_2(z)-\pi_0+\pi_0 z^N.\label{N0Z}
\end{align} }

It can be noticed from \eqref{N1Z}-\eqref{N0Z} that $N_0(z)$ is the kernel of the derivations on $N_1(z)$ and $N_2(z)$. For $N_0(z)$, we   have the following Proposition: 
\begin{prop}  \label{Theorem1}
For the two-stage service, where   the individual service time span is  {i.i.d.},  $N_0 (z)$ can be given as 
\begin{align}
N_0 (z) = &\bigg(\bigg [{\Big( {{{1 }}-g_{n  - 1}(z)} \Big)}^2  - g_{n  - 2}(z)\bigg] ^2  -  \cdots  - g_1(z)\bigg) ^2\nonumber\\
&- g_0(z),  \label{n2on}
\end{align}
where $n\in \mathbb{N}^+$ and $n \to  \infty$;   $g_{n}(z)$ ($\forall n$) is defined in  \eqref{g_n}.

Additionally, $N_0 (z)$ absolutely  converges for any $|z|<1$. $N_1 (z)$ and $N_2 (z)$ can be further obtained   from \eqref{N1Z} and \eqref{N2Z}. Their detailed  expressions are omitted here.
\end{prop}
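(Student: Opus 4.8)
The plan is to collapse the three relations \eqref{N1Z}--\eqref{N0Z} into a single self-referential functional equation for $N_0$ and then unfold that equation layer by layer. Substituting \eqref{N2Z} into \eqref{N0Z} and writing $h(z):=\mathcal{L}_{T_1}(\lambda-\lambda z)=\mathcal{L}_{T_2}(\lambda-\lambda z)$ (the individual service times being i.i.d. across both stages makes the two transforms coincide) gives
\[
N_0(z)=\big(N_0(h(z))\big)^2-g_0(z),\qquad g_0(z):=\pi_0\,(1-z^{N}).
\]
Let $h^{\circ k}$ denote the $k$-fold composition of $h$ ($h^{\circ 0}=\mathrm{id}$) and set $g_k(z):=\pi_0\big(1-(h^{\circ k}(z))^{N}\big)$, which I expect to be exactly the $g_k$ of \eqref{g_n}. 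The same equation evaluated at $h^{\circ k}(z)$ reads $N_0(h^{\circ k}(z))=\big(N_0(h^{\circ (k+1)}(z))\big)^2-g_k(z)$, so substituting this identity successively for $k=0,1,\dots,n-1$ produces, for every finite $n$, an exact nested expression whose innermost term is $\big(N_0(h^{\circ n}(z))\big)^2$; this is precisely the finite truncation of \eqref{n2on}. Everything up to here is routine algebra.

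The real work is the passage $n\to\infty$, which rests on the iteration $h^{\circ n}$. Since $\mathcal{L}_{T_i}(0)=1$ we have $h(1)=1$; $h$ is analytic and maps $\overline{\mathbb{D}}$ into itself; and $h'(1)=\lambda\,\mathbb{E}\{T_1\}$, which is strictly below $\tfrac12$ precisely because the stability condition $\lambda\big(\mathbb{E}\{T_1\}+\mathbb{E}\{T_2\}\big)<1$ behind \eqref{E_gamma} holds. Hence $z=1$ is the Denjoy--Wolff point, $h^{\circ n}(z)\to 1$ geometrically on compact subsets of $\mathbb{D}$, and (as $N_0$ is a proper PGF under stability, hence continuous at $1$ with $N_0(1)=1$) the innermost term $\big(N_0(h^{\circ n}(z))\big)^2\to 1$ while $g_k(z)\to 0$; replacing the innermost term by its limit $1$ turns the finite identity into \eqref{n2on}. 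For the convergence claim I would argue: (i) a geometric bound $|1-h^{\circ k}(z)|\le C\rho^{k}$ with $\rho$ slightly above $\lambda\mathbb{E}\{T_1\}$, hence $|g_k(z)|\le C'\rho^{k}$ and $\big|1-(N_0(h^{\circ n}(z)))^2\big|\le C''\rho^{n}$; (ii) every subexpression of the exact unfolding is a value of $N_0$ at a point of $\overline{\mathbb{D}}$, hence of modulus $\le 1$, so each ``square'' layer is Lipschitz with constant close to $2$ on the relevant bounded region. Propagating the bottom error through the $n$ squaring layers gives a top error $O\big((2\rho)^{n}\big)$ up to a subexponential factor, and $2\rho<1$ (again by stability, since $2\lambda\mathbb{E}\{T_1\}<1$) forces it to $0$ uniformly on compacta; thus the infinite nesting converges to an analytic function on $\mathbb{D}$ agreeing with $N_0$. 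Finally $N_1(z)$ and $N_2(z)$ follow by inserting \eqref{n2on} into \eqref{N1Z} and \eqref{N2Z}.

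The step I expect to be the main obstacle is this limit/convergence argument: one must show that the infinite nested expression actually converges and equals $N_0$, i.e. control the $2^{n}$-type amplification caused by the repeated squaring against the $\rho^{n}$-type decay of both $g_k$ and $1-h^{\circ k}(z)$. The stability condition $\lambda\big(\mathbb{E}\{T_1\}+\mathbb{E}\{T_2\}\big)<1$ is exactly what makes the decay win (it gives $\rho<\tfrac12$, hence $2\rho<1$); without it the nesting need not converge. Two further points that need care are the regularity of $N_0$ up to the boundary point $z=1$ (which holds once $\mathbb{E}\{N_0\}<\infty$, so $N_0'$ extends continuously to $\overline{\mathbb{D}}$) and the fact that the iterates $h^{\circ n}(z)$ stay inside $\overline{\mathbb{D}}$ with $z=1$ the unique attracting fixed point.
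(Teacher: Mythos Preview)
Your algebraic unfolding is exactly the paper's: collapse \eqref{N1Z}--\eqref{N0Z} under the i.i.d.\ assumption $\mathcal{L}_{T_1}=\mathcal{L}_{T_2}$ to the single recursion $N_0(z)=\big(N_0(f_1(z))\big)^2-g_0(z)$ and then substitute it into itself $n$ times, which is precisely how the paper reaches \eqref{n2on}. Your $h^{\circ k}$ is the paper's $f_k$ and your $g_k$ matches \eqref{g_n}.

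Where you diverge is in the limit step, and your treatment is considerably more careful than the paper's. The paper simply asserts (citing \cite{DB}) that $f_n(z)\to 1$ for $|z|<1$, hence $N_0(f_n)\to 1$ and $g_n(z)\to 0$, and then justifies ``absolute convergence'' by observing that the formula gives $N_0(1)=1$ and invoking the generic PGF property. It does not address the error amplification through the squaring layers at all. You, by contrast, identify the Denjoy--Wolff structure, extract the contraction rate $h'(1)=\lambda\,\mathbb{E}\{T_1\}<\tfrac12$ from the stability condition \eqref{sta} (using $\mathbb{E}\{T_1\}=\mathbb{E}\{T_2\}$), and then balance the $2^n$ Lipschitz growth of the nested squarings against the $\rho^n$ decay of the innermost defect to get an $O\big((2\rho)^n\big)$ error with $2\rho<1$. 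This is a genuine strengthening: your argument actually proves that the infinite nesting converges to an analytic function equal to $N_0$ on $\mathbb{D}$, whereas the paper's justification is closer to a consistency check. The price is only a standard Denjoy--Wolff/Koenigs linearisation and an elementary induction on the layer error, so your route is both correct and preferable.
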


\begin{proof}
The proof is presented in Appendix  \ref{appendixconvexproof}.
\end{proof}
 
     { \textbf{Proposition \ref{Theorem1}} provides a  general PGF formula. 
The derivations in \eqref{g_n}-\eqref{iteration} indicate  that $N_0 (z)$ is a function \emph{w.r.t.}   $N$, $\pi _0$,  and the queueing service parameters in \eqref{e2q2}, including $l$, $B$, and ${\sigma _{h_i}^2}$  $(i=1,2)$. }

   { In particular,} \textbf{Proposition \ref{Theorem1}} forms a fundamental  basis for all subsequent derivations and provides   two following  clues.  
 On  the one hand, the PGF property that $N_0 (0)= \Pr \{ N_0  = 0\}$ provides  a clue to investigate the relationship among different parameters,  e.g., $N$, $\pi _0$ and the queueing service parameters in  \eqref{e2q2}. On  the other  hand,  $N_0 (z)$ can be directly obtained once $z$ is given, which is useful in deriving the LST of different service stages.  In particular, by replacing ${\cal L}_{T_i }  (\theta)$ with $z$, we can easily obtain the LST of the  distribution function  of  $N_0$ packets' service time span (either in the FSS or the SSS, or a smaller-scale    stage). That is,  $N_0 (z)$  also provides  clues in deriving the mean service duration.

 As can be found in \textbf{Proposition \ref{Theorem1}}, $\pi _0$ is a fundamental and critical parameter in obtaining  $N_0(z)$ in \eqref{n2on} as well as subsequent derivation of the mean waiting time.  
On the other hand, according to \eqref{kpi0}, $\pi_0$ determines the mode transmission frequency at the AP. It  is closely related to  power consumption  at the AP. For $\pi_0$, \textbf{Proposition \ref{p00ai}} is provided  as follows: 

\begin{prop}  \label{p00ai}
For the two-stage queueing service, where   the individual service time span is  {i.i.d.}, $\pi_0$  is 
 \begin{equation}
 \pi _0=\bigg\{（N_0 \Big( {{\cal L}_{T_2 } (\lambda)} \Big)\bigg\}^2. \label{Pi0}
 \end{equation}
\end{prop}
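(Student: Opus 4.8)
The plan is to read $\pi_0$ directly off the probability generating function $N_2(z)$ that has already been assembled in \eqref{N2Z}. Since $\pi_0=\Pr\{N_2=0\}$ and, for any PGF written as $N_2(z)=\sum_{i\ge 0}\Pr\{N_2=i\}z^i$, one has $N_2(0)=\Pr\{N_2=0\}$, the whole statement reduces to substituting $z=0$ into \eqref{N2Z}. The only point needing a word of care is that the composed expression in \eqref{N2Z} is legitimately evaluable at $z=0$: this holds because $z=0$ lies in the interior of the unit disk, ${\cal L}_{T_i}(\lambda)\in(0,1)$ by \eqref{e2q2} (it is the Laplace--Stieltjes transform of a positive, non-degenerate service time taken at $\lambda>0$), and $N_0(\cdot)$ converges absolutely on $|z|<1$ by \textbf{Proposition \ref{Theorem1}}; hence $\pi_0=N_2(0)$ is well defined.

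Carrying out the substitution, $\lambda-\lambda z\big|_{z=0}=\lambda$, so \eqref{N2Z} yields
\begin{equation}
\pi_0 \;=\; N_2(0) \;=\; N_0\!\Big({\cal L}_{T_1}(\lambda)\Big)\cdot N_0\!\Big({\cal L}_{T_2}(\lambda)\Big).\nonumber
\end{equation}
Under the hypothesis of the proposition --- that the individual service time span is \emph{i.i.d.} across both stages, so that $T_1(t)$ and $T_2(t)$ obey the same law and hence ${\cal L}_{T_1}\equiv{\cal L}_{T_2}$ --- the two factors coincide and the product collapses to $\big\{N_0\big({\cal L}_{T_2}(\lambda)\big)\big\}^2$, which is \eqref{Pi0}. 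If one prefers not to impose the cross-stage symmetry, the same computation still delivers the product form $N_0({\cal L}_{T_1}(\lambda))\,N_0({\cal L}_{T_2}(\lambda))$, of which \eqref{Pi0} is the symmetric specialization. In either case $N_0$ is evaluated at an argument inside its disk of convergence, so the right-hand side is explicitly computable by plugging $z={\cal L}_{T_2}(\lambda)$ into \eqref{n2on}; note that, because $N_0(z)$ itself depends on $\pi_0$ through \eqref{N0Z}, \eqref{Pi0} is in effect a fixed-point equation pinning down $\pi_0$, consistent with the role flagged after \textbf{Proposition \ref{Theorem1}}.

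The analytic content here is light: the structural identity \eqref{N2Z} --- that the number of packets present at the end of the SSS is the independent superposition of the Poisson arrivals accrued over the FSS and over the SSS, each stage lasting a sum of $N_0$ i.i.d. service times --- is assumed, not re-derived, so there is no real obstacle to overcome. The two things to be careful about are therefore bookkeeping ones: first, that \eqref{N2Z} already encodes both stages correctly, so the $z=0$ substitution is all that is required; and second, the reading of the i.i.d. hypothesis that licenses passing from the two-factor product to the square. A last sentence should confirm $\pi_0\in(0,1)$, which follows from $0<{\cal L}_{T_2}(\lambda)<1$ together with $0<N_0(\cdot)<1$ on $(0,1)$, so that \eqref{kpi0}, $\mathbb{E}\{k\}=1/\pi_0$, is finite.
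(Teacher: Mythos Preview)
Your proof is correct and essentially matches the paper's. The only cosmetic difference is that the paper evaluates \eqref{pio} (i.e., \eqref{N2Z} substituted into \eqref{N0Z}) at $z=0$ and uses $N_0(0)=\Pr\{N_0=0\}=0$ from \eqref{NEQ_0}, whereas you evaluate \eqref{N2Z} directly and use $N_2(0)=\Pr\{N_2=0\}=\pi_0$; these are the same computation up to the additive term $\pi_0\eta(0)=\pi_0$ that cancels either way.
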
 
\begin{proof}
The proof is relegated in Appendix  \ref{appendixconvexproof2}.
\end{proof}

 \textbf{Proposition \ref{p00ai}} sheds light on  the relationship between $\pi_0$ and various system parameters, including the   packet arrival rate,   channel statistics,  and relay service threshold.   Although \eqref{Pi0} is not in an explicit-form, $\pi_0$ can be numerically determined efficiently with the approximation method that will be illustrated in Section V. 

\subsection{Mean Waiting Time}
 Throughout the paper, the waiting time represents the duration from one packet joins the queue system until its turn to  be served in the SSS, which is   composed of the  queuing delay, processing delay as well as transmission delay. Their specific meanings are presented as follows.

The queuing delay represents the time span from the instant  a packet is generated (equivalent to the instant when the packet enters the queue) till the instant    { it starts to be transmitted}. Clearly, it is dependent on  the traffic intensity as well as the link service rate.  
The processing delay is the time duration spent for (de)coding, (de)modulation, etc, which is ignorable.    On the other hand, under the assumption  that the CSI is known at the transmitters and with a maximum rate of the Shannon capacity, the packet can be correctly received and  retransmissions are avoided, theoretically. Each packet's  transmission delay can be obtained from \eqref{T11} and \eqref{T22}.

In this section,  the  queuing delay as well as transmission delay are derived. One  observed packet  has to wait for its data being forwarded to the AP in the following three cases: 
\subsubsection*{$\bullet$ Case $1$:  the observed packet arrives when the server is idle}
 
 In case $1$,  the observed packet,  denoted by $u_{i'}$ ($i \in \{1,2,\cdots, N\}$)  will wait 
 
\begin{description}
\item[(i)] the remaining idle period, plus

\item[(ii)] the time span of the  FSS, plus

\item[(iii)] the  overall SSS time span of packets that arrive prior to  $u_{i'}$.
\end{description}
\begin{align}
W_q(\theta {\rm{|}}\,case\,1) =& \sum\limits_{i = 1}^N \Pr \{ i' = i\}  \cdot \big\{{{\cal L}_{T_{INT} }(\theta )} \big\}
^{N - i}\nonumber \\ 
&\cdot \big\{{{\cal L}_{T_1 } } (\theta )\big\} ^{N} \big\{ {{\cal L}_{T_2 } }(\theta )\big\} ^{i - 1}, \label{waitcase1}
\end{align}
where ${\Pr \{ i' = i\}} $ represents the probability that the observed packet ranks the $i$th among $N$ packets  with respect to (\emph{w.r.t.}) the arrival time, and can be calculated as $
\Pr \{ i' = i\}  = \frac{1}{N};$ 
${{\cal L}_{T_{INT} }(\theta )}$ is LST of the distribution function of the time interval between any two neighbouring arrivals. For  Poisson arrivals, ${{\cal L}_{T_{INT} }(\theta )}$ can be computed as follows: 
\begin{equation}
{{\cal L}_{T_{INT} }(\theta )}=\int_0^{ + \infty } \exp( - \theta x)\Big(  \lambda \exp( - \lambda x)\Big)dx
=\frac{{\lambda }}{{\theta  + \lambda }}.
\end{equation}

\subsubsection*{$\bullet$ Case $2$: The observed packet arrives during the FSS}
 
Let $T'$  represent the total individual service time span of $N_0$ packets  in the FSS. The LST of the distribution function of  $T'$  can be expressed as 
\begin{equation}
{\cal L}_{T'} (\theta ) = N_0 \Big( {{\cal L}_{T_1 } (\theta )} \Big).\label{Ttheta}
\end{equation}

One observed packet arrives during the FSS will wait

\begin{description}
\item[(i)] the remaining time of $T'$ period, plus

\item[(ii)] the overall SSS time span of $N_0$ packets in the currently undergoing sub-service cycle, plus

\item[(iii)] the overall FSS time span of $N_0$ packets in  the next  sub-service  cycle, plus

\item[(iv)] the overall SSS time span (in  the next  sub-service  cycle) of the packets that arrive during the elapsed time of $T'$ period.
\end{description}

Note that (i)+(iv) can be obtained according to the joint LST property of the elapsed and remaining service time distribution function, i.e., 
\begin{equation}
X_ \pm ^{(case2)}(\theta) =\frac{{{\cal L}_{T'} \left( {\lambda  - \lambda {\cal L}_{T_2 } (\theta )} \right) - {\cal L}_{T'} (\theta )}}{\left(\theta  - \lambda  + \lambda {\cal L}_{T_2 } (\theta )\right)\mathbb{E}\{T'\}}.\label{jointcase2}
\end{equation}
For detailed derivations, interested readers can refer to Eq.(1.54b) in \cite{Book_queue}. Further, according to the  property of the PGF, namely,  PGF of sums of independent random variables equals to the  multiplication of PGF of each independent random variable (which is referred  to as the Sum-Multiplication property in the following), the LST of  the distribution function of waiting time in  Case $2$ is finally given as
\begin{align}
W_q(\theta {\rm{|}}\,case\,2) =&X_ \pm ^{(case2)}(\theta)\cdot N_0\Big({\cal L}_{T_2} \left( { \theta } \right)\Big)\cdot N_0\Big({\cal L}_{T_1} \left( { \theta } \right)\Big), \label{wait2}
\end{align}  
where   $\mathbb{E}\{T'\}$ can be obtained by differentiating \eqref{Ttheta} \emph{w.r.t.} $\theta$. Specifically, we have
\begin{equation}
\mathbb{E}\{T'\}=-\nabla_\theta {{\cal L}_{T'} (\theta )}\Big |_{\theta=0}. \label{ET}
\end{equation} 

\subsubsection*{$\bullet$ Case $3$: the observed packet joins the waiting queue during the SSS}

Let $T''$  denote the total SSS time span of $N_0$ packets. The LST of the distribution function of $T''$ is represented as
\begin{equation}
{\cal L}_{T''} (\theta ) = N_0 \Big( {{\cal L}_{T_2 } (\theta )} \Big).\label{TPIPP}
\end{equation}

 One observed packet arrives during the SSS will wait
\begin{description}
\item[(i)] the remaining time of $T''$ period, plus

\item[(ii)] the overall FSS time span of $N_0$ packets in  the next  sub-service  cycle, plus

\item[(iii)]  the overall  SSS time span of packets that arrive during the  FSS   of the currently undergoing   sub-service  cycle, plus

\item[(iv)] the overall  SSS time  span  of the packets that arrive during the elapsed time of $T''$ period.
\end{description}

Similar to \eqref{jointcase2},  (i)+(iv) can be obtained according to the joint LST property of the elapsed and remaining service time distribution function, which can be expressed as 
\begin{equation}
X_ \pm ^{(case3)}(\theta) =\frac{{{\cal L}_{T''} \Big( {\lambda  - \lambda {\cal L}_{T_2 } (\theta )} \Big) - {\cal L}_{T''} (\theta )}}{\Big(\theta  - \lambda  + \lambda {\cal L}_{T_2 } (\theta )\Big)\mathbb{E}\{T''\}} .\label{jointcase3}
\end{equation}

The LST of the distribution function of the waiting time in this case can be calculated as 
\begin{align}
W_q(\theta {\rm{|}}\,case\,3) (\theta)=&X_ \pm ^{(case3)} (\theta)\cdot N_0 \Big( {{\cal L}_{T_1 } (\theta )} \Big) \cdot\nonumber\\
& N_0 \bigg( {{\cal L}_{T_1 } \Big(\lambda  - \lambda {\cal L}_{T_2 } (\theta )\Big)} \bigg)\label{wait3}
,
\end{align}
where, similar to the expression of $\mathbb{E}\{T'\}$,   $\mathbb{E}\{T''\}$ can be illustrated as 
\begin{equation}
\mathbb{E}\{T''\}=-\nabla_\theta {{\cal L}_{T''} (\theta )}\Big|_{\theta=0}. \label{ET}
\end{equation}

 Poisson arrivals see time averages  (PASTA) (i.e., the
fraction of arrivals that see the process in some state is equal to the fraction of time span when the process is in that state) holds for our two-phase M/G/$1$ queueing system \cite{Wolff}, the probabilities that cases $1$, $2$ and $3$ occur can be formulated as  \begin{small}
\begin{align}
{\Pr}\{Case 1\}&=
\frac{\mathbb{E}\{ T_0 \}}{{ \mathbb{E}\{ T_{RC} \} }}=1-\lambda {\mathbb{E}\{T_1\}}-\lambda {\mathbb{E}\{T_2\}} \label{case1Pro},\\
{\Pr}\{Case 2\}&=
\frac{{\mathbb{E}\{ \Gamma \} \mathbb{E}\{ T_1 \} }}{{\mathbb{E}\{ T_{RC} \} }}=\lambda {\mathbb{E}\{T_1\}},
\label{case2Pro}
\\
{\Pr}\{Case 3\}&=
\frac{{\mathbb{E}\{ \Gamma \} \mathbb{E}\{ T_2 \} }}{{\mathbb{E}\{ T_{RC} \} }}=\lambda {\mathbb{E}\{T_2\}},
\label{case3Pro}
\end{align}\end{small}
where $\mathbb{E}\{ T_0 \}$ is defined in \eqref{ET0}.
 

 To keep the stability of the two-stage queueing system,  the busy probability of the relay server must be less than $1$; namely, 
 ${\Pr}\{Case 2\}+{\Pr}\{Case 3\}<1$. Equivalently, it is required that 
\begin{equation}
 \lambda < \frac{1}{\mathbb{E}\{ T_1 \}+\mathbb{E}\{ T_2 \}}. \label{sta}
\end{equation}
must be satisfied.

Finally,   the following proposition can be presented.
\begin{prop}  \label{Theorem3}
In a two-stage queueing system with \textit{i.i.d.} individual  service time span, the mean waiting time of one packet is  
\begin{small}
\begin{align}
\mathbb{E}\{W_q\} =-& {\nabla} _\theta \Bigg( W_q(\theta {\rm{|}}\,case\,1)\cdot{\Pr}\{Case 1\}\Bigg)\Bigg|_{\theta=0}
\nonumber\\ 
&- {\nabla} _\theta \Bigg( W_q(\theta {\rm{|}}\,case\,2)\cdot{\Pr}\{Case 2\}\Bigg)\Bigg|_{\theta=0}\nonumber\\ 
&- {\nabla} _\theta \Bigg( W_q(\theta {\rm{|}}\,case\,3)\cdot{\Pr}\{Case 3\})\Bigg)\Bigg|_{\theta=0}.\label{E1WQ}
 \end{align}
\end{small}
\end{prop}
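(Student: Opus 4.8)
The plan is to build the unconditional Laplace--Stieltjes transform (LST) of the waiting time, $\mathcal{L}_{W_q}(\theta)=\mathbb{E}\{e^{-\theta W_q}\}$, by conditioning on which phase of a regeneration cycle the tagged packet witnesses upon arrival, and then to recover the mean via the standard identity $\mathbb{E}\{W_q\}=-\nabla_\theta\mathcal{L}_{W_q}(\theta)|_{\theta=0}$ for a nonnegative random variable. First I would note that a regeneration cycle is partitioned in time into exactly one idle period and $k$ sub-cycles, each an FSS followed by an SSS; hence the events Case $1$ (arrival during the idle period), Case $2$ (arrival during some FSS) and Case $3$ (arrival during some SSS) are mutually exclusive and exhaustive for an arriving packet. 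Because the system is of M/G/$1$ type, PASTA \cite{Wolff} applies, so the probability that the tagged packet falls in a given phase equals the long-run time fraction occupied by that phase, i.e.\ the ratios \eqref{case1Pro}--\eqref{case3Pro}, which sum to one in view of \eqref{E_Gamma} and \eqref{E_g2amma}. The law of total probability then yields
\[
\mathcal{L}_{W_q}(\theta)=\sum_{j=1}^{3}W_q(\theta\,|\,case\,j)\cdot\Pr\{Case\,j\}.
\]

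Next I would justify that each conditional transform is a product of LSTs of \emph{independent} building blocks, so that \eqref{waitcase1}, \eqref{wait2} and \eqref{wait3} are legitimate applications of the Sum-Multiplication property. In Case $1$, conditioned on the tagged packet having rank $i$ (uniform on $\{1,\dots,N\}$) among the $N$ accumulated packets, its wait decomposes into $N-i$ residual interarrival gaps, the $N$ FSS service times of the current sub-cycle, and the $i-1$ SSS service times of the packets ahead of it; these are mutually independent by the i.i.d.\ assumptions on interarrival times and on $\{T_{1,i}\}$, $\{T_{2,i}\}$, and averaging over $i$ gives \eqref{waitcase1}. In Cases $2$ and $3$ the only non-elementary component is the (elapsed, residual) pair of the ongoing super-service period $T'$ (resp.\ $T''$), whose joint transform is the identity \eqref{jointcase2} (resp.\ \eqref{jointcase3}) quoted from \cite{Book_queue}; the remaining factors --- the backlog's SSS/FSS in the current and next sub-cycle and the SSS of packets that arrived during the elapsed part --- are independent of that pair and of one another, so their transforms multiply, and \eqref{Ttheta}, \eqref{TPIPP} re-express $\mathcal{L}_{T'},\mathcal{L}_{T''}$ through the already-established $N_0(\cdot)$. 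Differentiating the displayed identity at $\theta=0$, using linearity over the finite sum and the fact that the $\Pr\{Case\,j\}$ are constants in $\theta$, then delivers exactly \eqref{E1WQ}; equivalently, each term equals $\Pr\{Case\,j\}\,\mathbb{E}\{W_q\mid case\,j\}$ and their sum is $\mathbb{E}\{W_q\}$ by the law of total expectation.

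The hard part will be the regularity needed to differentiate $\mathcal{L}_{W_q}$ term by term at $\theta=0$, i.e.\ to establish that each conditional waiting time has finite mean. This reduces to: (a) finiteness of $\mathbb{E}\{T'\}$ and $\mathbb{E}\{T''\}$, which follows from the convergence of $N_0(z)$ for $|z|<1$ guaranteed by \textbf{Proposition \ref{Theorem1}} together with the finiteness of $\mathbb{E}\{T_1\}$, $\mathbb{E}\{T_2\}$ shown via \eqref{eTI}; and (b) the factors $\bigl(\theta-\lambda+\lambda\mathcal{L}_{T_2}(\theta)\bigr)$ in the denominators of \eqref{jointcase2}--\eqref{jointcase3} having only a \emph{simple} zero at $\theta=0$, which is cancelled by the numerator, since $\frac{d}{d\theta}\bigl(\theta-\lambda+\lambda\mathcal{L}_{T_2}(\theta)\bigr)\big|_{\theta=0}=1-\lambda\mathbb{E}\{T_2\}>0$ under the stability condition \eqref{sta}. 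Once these analyticity points are secured, evaluating the three derivatives in \eqref{E1WQ} is a routine, if tedious, application of the product rule.
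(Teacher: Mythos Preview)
Your argument is correct and, in fact, cleaner than the paper's own derivation. You build the unconditional LST as a \emph{sum},
\[
\mathcal{L}_{W_q}(\theta)=\sum_{j=1}^{3}W_q(\theta\,|\,case\,j)\,\Pr\{Case\,j\},
\]
via the law of total probability together with PASTA, and then differentiate at $\theta=0$; this is the standard route and every step is sound. The paper, by contrast, sets $f_k(\theta)=W_q(\theta\,|\,case\,k)\Pr\{Case\,k\}$ and writes $\mathbb{E}\{W_q\}=-\nabla_\theta\bigl(f_1(\theta)f_2(\theta)f_3(\theta)\bigr)\big|_{\theta=0}$ --- a \emph{product} rather than a sum --- and then appeals to ``$f_k(0)=1$'' to collapse the product-rule expansion into \eqref{E1WQ}. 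Both of those steps are individually problematic (the LST of a mixture is a convex combination, not a product, and $f_k(0)=\Pr\{Case\,k\}\neq 1$ in general), yet the two inaccuracies happen to compensate and land on the correct expression. Your decomposition avoids this detour entirely, and the regularity discussion you add --- finiteness of $\mathbb{E}\{T'\}$, $\mathbb{E}\{T''\}$ via \textbf{Proposition \ref{Theorem1}}, and the simple zero of $\theta-\lambda+\lambda\mathcal{L}_{T_2}(\theta)$ at $\theta=0$ under the stability condition \eqref{sta} --- is a useful supplement that the paper does not make explicit.
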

\begin{proof}
The proof can be found in Appendix \ref{wproof}.

\end{proof}
\textbf{Proposition \ref{Theorem3}} provides a general formula for the mean waiting time, which relates   the   mean waiting time with multiple  queueing system parameters, namely  packet arrival rate,   channel statistics,  and relay service threshold. It is more computationally efficient than  Monte-Carlo simulations and provides guidelines for engineering design.

\section{Power Minimization Problem Formulation}

Based on the prior   analytic packet delay result, in this section we further illustrate the power  minimization problem. Before that, the long-term average power consumption  at different service stages has to be first formulated.  

\subsection{Power Consumption Formulation}
A more general and practical power consumption model is considered.  
Specifically, the overall power consumed by the network is composed of three parts:  energy consumed at the packets, relays, and   AP. The  consumed energy supports: i)  mode transitions between the listening  and receiving modes, ii)  packets receiving in the FSS,  iii) packets  transmitting at the relay   and   receiving at  AP (both in the SSS).  

     { Based on our model, each transceiver is  in   listening,  transmitting, or receiving mode.  Specifically, the transceivers'  states in the idle, FSS, and SSS are listed in Table \ref{States}. }
   \begin{table}[htbp]
\caption{   { Transceivers' States}}
\centering 
\begin{tabular}{c ||  c |   c |   c }
\hline 
\textbf{Stage}                                                                                                             & \textbf{Users' state}  & \textbf{Relay's state}                                                                                                             & \textbf{AP's state}          \\ \hline\hline 
$Idle$                                                                                                                       &  listening &  listening &  listening       \\ \hline
$FSS$ &  listening/transmitting &  receiving &  listening \\ \hline
$SSS$                                                                                                               &  listening  &  transmitting &  receiving \\ \hline 
\end{tabular}\label{States}
\end{table}

In the sequel, we formulate the expected energy consumption during \emph{one regeneration cycle}, based on which the average power consumption is formulated.

\subsubsection{Power Consumption at Users}

The power consumption  in waiting and transmitting modes is   \cite{3}:
\begin{equation}
{P_{u}}=\left\{\begin{array}{ll}
{P_{0,u}}, & \mathrm{waiting\,mode,}\\
{P_{0,u}} + {\vartriangle _{P_u}} {{p_1}}\mathop = \limits^{
\Delta}{P_{T,u}}, & \mathrm{transmitting\,mode,}\label{poweruser}
\end{array} \right.
\end{equation}
where ${P_{0,u}}$ represents the basic electrical circuit  power consumption; ${\vartriangle_{P_u}}$ is the slope of the load-dependent power consumption.  Note that the waiting users are not shut off when they are in the waiting mode, since they  have to listen to the packet-transmitting-request from the relay, as illustrated in Footnote $1$.

As depicted in Fig.~\ref{Regeneration_Cycle},  the  expected overall energy cost at  users in each regeneration cycle contains energy consumed in the idle  mode (with an average duration of $\mathbb{E}\{T_{u}^{idle}\}$) and transmitting mode (with an average duration of $\mathbb{E}\{T_{u}^{t}\}$).  $\mathbb{E}\{T_{u}^{idle}\}$ and $\mathbb{E}\{T_{u}^{t}\}$ can be respectively formulated as 
\begin{align}
\mathbb{E}\{T_{u}^{idle}\}&=\mathbb{E}\{W_q\}-\mathbb{E}\{T_1\},\label{ETUidle}\\
\mathbb{E}\{T_{u}^{t}\}&=   { \mathbb{E}\{T_1\}}. \label{ETUt}
\end{align}

Hence, the   expected  overall energy cost   at the users, denoted by $\mathbb{E}_u^{tot}$, can be given as
\begin{align}
\mathbb{E}_u^{tot} &= \underbrace {\mathbb{E}\{ \Gamma \}\mathbb{E}\{T_{u}^{idle}\}  \cdot P_{0,u} }_{Energy\;cost\;in\;the \; idle\; mode} + \nonumber \\
& \underbrace {\mathbb{E}\{ \Gamma \} \mathbb{E}\{T_{u}^{t}\}  \cdot P_{T,u} }_{Energy\;cost\;for\;data\;transmitting}.\label{Eusertot}
\end{align}

\subsubsection{Power Consumption at the Relay Server}

For relays, the power consumption model    in listening, receiving, and transmitting modes  is  \cite{3} 
\begin{equation}
{P_{R}}=\left\{\begin{array}{ll}
{P_{0,R}} , & \mathrm{receiving\,\,mode,}\\
{P_{0,R}} + {\vartriangle _{P,R}}\cdot{{p_2}}\mathop = \limits^{
\Delta 
}{P_{T,R}}, & \mathrm{transmitting\,\,mode,}\\
{P_{listen,R}}, & \mathrm{listening \,mode,} \label{powerR}
\end{array} \right.
\end{equation}
where ${\vartriangle_{P_R}}$ is the slope of the load-dependent power consumption. Generally, we have ${P_{0,R}}>{P_{sleep,R}}.$

For each regeneration cycle depicted in  Fig.~\ref{Regeneration_Cycle}, the total expected energy cost at the relay server, denoted by $\mathbb{E}_R^{tot}$, includes  energy consumed in the idle period (when relays are in the  listening mode), $k$ FSSs  (when relays are in the receiving mode) and $k$  SSSs  (when relays are in the transmitting mode). Hence, $\mathbb{E}_R^{tot}$ is represented by
\begin{align}
\mathbb{E}_R^{tot}  =& \underbrace {\mathbb{E}\{ T_0 \}  \cdot P_{listen,R} }_{Energy\;cost\;for\;listening}
+
\underbrace {\mathbb{E}\{ \Gamma \} \mathbb{E}\{ T_1 \}  \cdot P_{0,R} }_{Energy\;cost\;for\;data\;receiving} +\nonumber \\
&
\underbrace {\mathbb{E}\{ \Gamma \} \mathbb{E}\{ T_2 \}  \cdot P_{T,R} }_{Energy\;cost\;for\;transmitting}.\label{ERtot}
\end{align}

\subsubsection{Power Consumption at the AP}
 
The power consumption  at the AP in listening and receiving  modes is  described as follows \cite{3}:
\begin{equation}
{P_{AP}}=\left\{\begin{array}{ll}
{P_{0,AP}}, & \mathrm{receiving\,\,mode,}\\
{P_{listen,AP}}, & \mathrm{listening \,mode,} \label{powerR}
\end{array} \right.
\end{equation}
Generally, we have ${P_{0,AP}}>{P_{sleep,AP}}.$ 

For each regeneration cycle,  the listening mode duration at the AP  consists of the idle period and  FSSs for $\Gamma$ packets. The total expected energy cost   at the relay server is 
\begin{align}
\mathbb{E}_{AP}^{tot}  &= \underbrace {\Big(\mathbb{E} \{T_0\}+\mathbb{E}\{ \Gamma \} \mathbb{E}\{ T_1 \}\Big)  \cdot P_{sleep,AP} }_{Energy\;cost\;for\;listening} + \nonumber \\
& \underbrace {\mathbb{E}\{ \Gamma \} \mathbb{E}\{ T_2 \}  \cdot P_{0,AP} }_{Energy\;cost\;for\;data\;receiving}.\label{EAPtot}
\end{align}
 
\subsubsection{Power Consumptions for Mode Transitions}

Mode transitions happen at the relay when i) the   service turns into the FSS  from the idle mode, and ii)  the service  turns into the idle mode from the SSS. Additionally, the  mode transition  happens at the AP when i) the service turns into the SSS  at the completion of the FSS, and ii) when the service turns into the idle/FSS mode at the completion of the SSS.  The switching energy cost is  fixed and denoted by  $E_{sw}$ for each mode transition. For every individual regeneration cycle, the total mode transition energy cost is 
\begin{equation}
\mathbb{E}_{sw}^{tot}=\mathbb{E}\{ 2E_{sw}  + 2kE_{sw} \}\mathop = \limits^{(c')}2E_{sw} (1 + \frac{1}{\pi _0}), \label{vpi_0}
\end{equation}
where  $ 2E_{sw} $ is the energy cost for  mode switching at the relay; $2kE_{sw}$ is the energy cost for  mode switching at the AP; $(c')$ satisfies with \eqref{kpi0}.

It is worth noting that the  electrical circuit power at the relay server  and AP are  much higher than  the packet transmitting power (e.g., $10$ watts versus $0.1$ watts or abound) \cite{3}. Hence, the packet transmitting power   $p_2$  can be ignored.   
 
\subsection{Average Power Minimization Problem}

The power minimization problem under the queueing stability can be formulated as follows:
\begin{align}
&\mathbf{P1:}\;\; \mathop {\min }\limits_N \mathbb{E}\{P_{tot} \} =
\frac{\mathbb{E}_u^{tot}+\mathbb{E}_R^{tot} +\mathbb{E}_{AP}^{tot} 
+\mathbb{E}_{sw}^{tot} }{\mathbb{E}\{T_{RC}\}}
\label{con_objective}\\
&s.t.\quad\quad  \mathbb{E}\{W_q\} \leq D_0\label{delaycon},
\end{align}
where $D_0$ in \eqref{delaycon} is the maximum tolerable mean waiting time.

{ As can be seen from \eqref{n2on}, the resulting expression of  $N_0 (z)$ is in a  recursive form. It is mathematically intractable to obtain the explicit formulae of the  analytical solutions for $N$ and $\pi_0$ due to the following reasons.} 
{Firstly, it is considerably challenging to  determine $n_0 \in \mathbb{N}$ ($n_0\leq n$) at which $N_0(z)$ starts converging.  Secondly, the polynomial terms in  $N_0(z)$ contains $\pi_0$ of different orders.  As implied in  \eqref{n2on}, the highest order of $\pi_0$ when $N_0(z)$  
  converges is $2^{n_0-1}$. Our numerical results in Section \ref{Verify} reveal that $n_0$ can be larger than $6$. The Abel Theorem \cite{ABEL} reveals  that explicit  solution does not exist for irreducible algebraic equations with the highest order being larger than $5$. Thirdly, $\pi_0$ is closely coupled with $N$, as implied in  \eqref{n2on}-\eqref{etaz}. The non-existence of  
the explicit form of $\pi_0$ determines that it is hard to achieve  the closed-form of the global optimal $N$. }

Fortunately,  as illustrated in \textbf{Proposition 1},  $N_0(z)$ converges  for any $|z|<1$. It motivates us to determine the solutions numerically and with a low-level complexity.
The key formulations to obtain  the desired metrics, including $\mathbb{E}\{P_{tot} \}$ in \eqref{con_objective} and $\mathbb{E}\{W_q\}$ in \eqref{delaycon}   are: 1) the expectations of service time span $T_1$ and $T_2$ in \eqref{eTI}; and 2) the expectation of LST in  \eqref{e2q2}.  
It is really complicated  to 
evaluate analytically their expressions.
To obtain the desired metrics, manifold numerical integrations are performed, particular in \eqref{wait2} and \eqref{wait3}.  Straightforward  numerical computations  will lead to an overall  high computational complexity. 

     { The lower bound of the complexity, denoted by $C_{total}$,   is   
\begin{small}
 \begin{align}
C_{total}=&\mathcal{O}\Big(
C_{\mathbb{E}_T}+  \frac{{n(n + 1)}}{2}C_{{\cal L}_T } + {{n(n + 1)}} C_{{\cal L}_T }  + \frac{{n(n + 1)}}{2}C_{{\cal L}_T }  
 \Big)\nonumber\\
  &= {\cal O}\Big(
C_{\mathbb{E}_T}+  2{{n(n + 1)}} C_{{\cal L}_T }\Big),  \label{ctotal} 
 \end{align} 
 \end{small}
 where $ \mathcal{O}(C_{\mathbb{E}_T})$ and  $ \mathcal{O}(C_{\mathcal{L}_{{T}} }) $ are respectively the computational complexity for $\mathbb{E}\{ T_i \}$ in 
\eqref{eTI} and  $\mathcal{L}_{{T}_i}(\theta)$ 
in \eqref{e2q2}. Note that   numerical integrations are executed to compute $\mathbb{E}\{ T_i \}$ and $\mathcal{L}_{{T}_i}(\theta)$, implying that $ \mathcal{O}(C_{\mathbb{E}_T})$ and  $ \mathcal{O}(C_{\mathcal{L}_{{T}} }) $ are  large. 
  The detailed complexity illustrations  
are presented in  Appendix \ref{ccaa}.}

In the next section, we exploit 
Jensen's inequality   to 
approximate these expressions to their alternative closed-forms,  in order  that the computational complexity can be greatly reduced. Then the optimization problem is further illustrated and determined.

\section{Problem Relaxation and Illustration}

\subsection{Performance Bounds}
As can be noticed from \eqref{eTI}, it is mathematically intractable to derive the exact closed-form expression of $\mathbb{E}\{ T_i \}$. Fortunately,   $\mathbb{E}\{T_i\}$   is  strictly convex \emph{w.r.t.}  $x$. Using   Jensen's inequality,  the transmission rate is bounded as \cite{Jensen1}, \cite{Jensen2} 
\begin{align}
J \buildrel \Delta \over = \left\{ \begin{array}{l}
 C_i (h_i ) \le B\log (1 + \frac{{p_i }}{{\sigma _i^2 }}\mathbb{E}\{ X_i \} ) \buildrel \Delta \over = C_i ^{(upper)} (h_i ),  \\ 
 C_i (h_i ) \ge B\log (1 + \frac{{p_i }}{{\sigma _i^2 \mathbb{E}\{ \frac{1}{X_i} \} }})
 \buildrel \Delta \over = C_i ^{(lower)} (h_i )
. \\ 
 \end{array} \right. \label{Jenson_Bound}
\end{align} 

In practice, to guarantee the   QoS of data transmissions, SNR  at the receivers is usually high. Since $\log (1 +x')$ tends to change subtly  in the high $x'$ regime, it follows that $
C_i ^{(lower)} (h_i )\simeq C_i ^{(upper)} (h_i ).$
Thereby, the expectation of $T_i$ can be  approximated by 
\begin{equation}
\mathbb{E}\{ T_i \} \simeq C_i ^{(upper)} (h_i ) \buildrel \Delta \over =\frac{l}{{B\log (1 + \frac{{p_i }}{{\sigma _i^2 }}\sigma _{h_i }^2 )}}
. \label{eTI_appro}
\end{equation}




Due to the irreducible integral computation in   \eqref{e2q2},   it is also  intractable to derive the  exact  closed-form of $\mathcal{L}_{{T}_i}(\theta)$.  
 Let ${\nabla} ^2_x(\cdot)$ represent the second order derivative operator \emph{w.r.t.} $x$. For the sake of clarity, we introduce  $\varsigma (x)$ and define it as 
$\varsigma (x)={\exp \Big( - \theta \frac{l}{C_i (x)}\Big)}.$ 
Then, \eqref{e2q2} can be rewritten as 
 \begin{equation} 
\mathcal{L}_{{T}_i}(\theta)=\int_0^{ + \infty } \varsigma (x) {g_{X_i}(x)}dx. \label{eq33}
\end{equation}

Note that $\theta\geq 0$ in our above formulae.  After some simple calculations, we have  $ {\nabla} ^2_x \big(\varsigma (x)\big)<0$
in the high SNR region,  indicating that $\varsigma (x)$ is concave \emph{w.r.t.} $x$ in the high SNR regime. Additionally, from \eqref{eq33}, we note 
 that $\mathcal{L}_{{T}_i}(\theta)=\mathbb{E}\{ \varsigma (x|\theta)\}$.  Similar to the approximation  method in \eqref{eTI_appro}, $\mathcal{L}_{{T}_i}(\theta)$ can be upper bounded  by exploiting the concavity property, i.e., 
\begin{equation}
\mathcal{L}_{{T}_i}(\theta) \leq \exp \Big( - \theta \frac{l}{{B\log (1 + \frac{{p_i }}{{\sigma _i^2 }}\sigma _{h_i }^2 )}}\Big). \label{LTi_0appro}
\end{equation}

{Following the similar approximation method,  in the high SNR regime,  $\mathcal{L}_{{T}_i}(\theta)$ can be approximated by
\begin{equation}
\mathcal{L}_{{T}_i}(\theta) \simeq \exp \Big( - \theta \frac{l}{{B\log (1 + \frac{{p_i }}{{\sigma _i^2 }}\sigma _{h_i }^2 )}}\Big). \label{LTi_appro}
\end{equation} 
}

With the above mathematical transforms,  $\mathbb{E}\{ T_i \}$  and  $\mathcal{L}_{{T}_i}(\theta)$ are converted into exponential functions,   which are  tractable
closed-form and can be very easily evaluated.  
  More importantly, based on the aforementioned approximations, manifold integral calculations are avoided and the  computational  complexity is significantly reduced. We present the brief complexity analyses in what follows.          {Following the derivations  in  \eqref{ctotal}, we find that  
  the lower bound of the complexity, denoted by $C'_{total}$,  is    
\begin{equation}
C'_{total}= {\cal O}\Big(
c_{\mathbb{E}_T}+  2{{n(n + 1)}} c_{{\cal L}_T }\Big),\label{smallctot} 
\end{equation}
where  
$ \mathcal{O}(c_{\mathbb{E}_T})$ and  $ \mathcal{O}(c_{\mathcal{L}_{{T}} }) $   respectively represent the  complexity for computing     \eqref{eTI_appro} and    \eqref{LTi_appro}.  Compared with $ \mathcal{O}(C_{\mathbb{E}_T})$ and  $ \mathcal{O}(C_{\mathcal{L}_{{T}} }) $, $ \mathcal{O}(c_{\mathbb{E}_T})$ and  $ \mathcal{O}(c_{\mathcal{L}_{{T}} }) $  are considerably small. Hence, the overall complexity is  reduced with our approximation method.}

\subsection{Problem Analysis and Determination} \label{PDsection}

 In this following, we further illustrate that the determination of global optimal $N$ can be obtained numerically by analysing both the objective and  mean delay   constraint. 

 We first present how to derive the feasible range of $N$ that satisfies the mean delay constraint.  Typically, a larger $N$ results in a longer queue length both in the idle and sub-cycle periods. Given that the data transmitting time  on each individual link is fixed, a longer queue length implies that  an end-user experiences more waiting time. Correspondingly, it can be concluded that $\mathbb{E}\{W_q \}$ monotonically increases with $N$, which will also be numerically verified in Subsection \ref{Numer_B}.   


The bisection method  can be applied to obtain the maximum tolerable $N$, hereinafter denoted as $N_{max}$, that adheres to the  delay constraint \eqref{sta}. From \eqref{delaycon},   $N_{max}$ is uniquely determined by
\begin{equation}
\mathbb{E}\{W_q \}\Big|_{ N=N_{max}}\leq D_0,  \;
\rm{and} \;
 \mathbb{E}\{W_q \}\Big|_{ N=N_{max}+1} > D_0. \label{Nnotoptimal}
\end{equation}

 In the following, we further explain that the global optimal $N$ can be obtained efficiently.  Denote the optimum $N$ that minimizes $\mathbb{E}\{P_{tot} \}$ as $N'$. Clearly, the optimum solution of $\mathbf{P1}$ is  
\begin{equation}
N^*=\min\{N_{max}, N'\}.\label{Noptimal}
\end{equation}

To determine $N'$, we first analyse the monotonicity property of $\mathbb{E}\{P_{tot} \}$.  The following three engineering insights and outcomes
  can be achieved:

i)   $\frac{\mathbb{E}_u^{tot}}{{\mathbb{E}\{T_{RC}\}}}$ can be rewritten as 
\begin{align}
 \frac{{\mathbb{E}_u^{tot} }}{\mathbb{E}\{ T_{RC} \}} &=\lambda\mathbb{E} \{ T_u^{idle} \} \cdot P_{0,u} {\rm{ + }}\lambda \mathbb{E}\{ T_u^t \} \cdot P_{T,u} \nonumber \\ 
 & \buildrel \eqref{ETUidle} \over  = \lambda \Big(\mathbb{E}\{ W_q \}  - \mathbb{E}\{ T_1 \} \Big) \cdot  P_{0,u} + \lambda\mathbb{E} \{ T_u^t \} \cdot P_{T,u}.  \label{Eu_tot3}
\end{align}
$\mathbb{E}\{ T_1 \}$ and $\mathbb{E} \{ T_u^t \}$ are constant \emph{w.r.t.} $N$; and $\mathbb{E}\{ W_q \} $ monotonically increases with $N$. In the case that  $P_{0,u}$ takes a sufficiently small value, $ \frac{{\mathbb{E}_u^{tot} }}{\mathbb{E}\{ T_{RC} \}}$  increases very  slightly with $N$; otherwise, it grows fast with $N$. 

ii) 
$\pi_0$ decreases with $N$, resulting in a decreasing  mode switching frequency and further a fall of $\frac{\mathbb{E}_{sw}^{tot}}{{\mathbb{E}\{T_{RC}\}}}$. 
 The illustration will be presented in subsection \ref{Verify} and demonstrated in Fig. \ref{lambda02Theo_Simu} (b).  

iii)  $\frac{\mathbb{E}_R^{tot} +\mathbb{E}_{AP}^{tot}}{\mathbb{E}\{T_{RC}\}}$  can be calculated as
\begin{small}
\begin{align}
\frac{\mathbb{E}_R^{tot} +\mathbb{E}_{AP}^{tot}}{\mathbb{E}\{T_{RC}\}}  \mathop = \limits^{\eqref{E_gamma}, \eqref{E_g2amma}} &  \underbrace {\rho  \cdot P_{sleep,R} 
+ \lambda\mathbb{E}\{ T_1 \}  \cdot P_{0,R}  + \lambda\mathbb{E}\{ T_2 \}  \cdot P_{T,R} \nonumber}_{Power\;consumption\;at\;the\;Relay} \nonumber \\
+ &\underbrace {\Big(  \rho + \lambda\mathbb{E}\{ T_1 \}   \Big)\cdot P_{sleep,AP}+ \lambda\mathbb{E}\{ T_2 \}  \cdot P_{0,AP},}_{Power\;consumption\;at\;the\;AP}  \label{Esim}
\end{align}
\end{small}
where $\rho   =  {{1 - \lambda \mathbb{E}\{ T_1 \}  - \lambda \mathbb{E}\{ T_2 \} }}.$
It can be observed that $\frac{\mathbb{E}_R^{tot} +\mathbb{E}_{AP}^{tot}}{\mathbb{E}\{T_{RC}\}}$ remains  invariant \emph{w.r.t.} $N$.  

Based on the aforementioned analyses, it arises that  there exists  a fundamental trade-off   
between  end-users' power consumption  (i.e., $\frac{\mathbb{E}_u^{tot}}{{\mathbb{E}\{T_{RC}\}}}$) and mode transition power consumption  (i.e., $\frac{\mathbb{E}_{sw}^{tot}}{{\mathbb{E}\{T_{RC}\}}}$).  The monotonicity of $\mathbb{E}\{P_{tot} \}$ is hard to   analyze due to the existence of $\pi_0$.  

It is worthwhile to note that $N_0(z)$ converges for any $|z|<1$ as illustrated in \textbf{Proposition 1}. Then  with the approximation method, 
 tractable closed-formulae can be    derived,    easily evaluated, and hence the overall computational complexity in obtaining $N'$ can be greatly reduced.  
 The resulting table can be stored in the device, which will facilitate  engineering designs.

%

\section{Numerical Results and Discussions}
In this section,   we seek to find the approximated optimum solutions and reveal the mean power consumption and waiting time trade-offs. The performance comparison  between the approximation method and    exact calculations will also be presented.

 The parameters are given in Table \ref{Parameter_Settings}, unless otherwise stated. Note that  packets' transmitting power, i.e., $p_1$ and  $p_2$ take their maximum values to diminish each packet's    transmission time (equivalently, the FSS and SSS  durations are reduced), aiming at saving  the overall electrical circuit energy. 
To obtain insightful conclusions and facilitate the analysis, we normalize   $\sigma _{{h_i}}$ and  assume that $\frac{p_1}{{\sigma^2 _{1}}}=\frac{p_2}{{\sigma^2 _{2}}}$.

\begin{table}[htbp]
\centering
\caption{Parameter Settings}
\begin{tabular}{||c|c||c|c||}
\hline
\textbf{Parameters}                                                                                                             & \textbf{Value} & \textbf{Parameters}                    & \textbf{Value}           \\ \hline\hline 
$p_1=p_2$                                                                                                                       & $0.1$ W            &${P_{0,R}}={P_{0,AP}}$         & $10$                      W \cite{3}\\ \hline
$ {\vartriangle _{P,R}}$ & 2.6 \cite{3}            & ${P_{listen,R}}={P_{listen,AP}}$ & $4$ W                       \cite{3}  \\ \hline
$ p_i/{\sigma^2 _{i}}  (i=1,\; 2)$ & $40$ dB          & $B$                                      & $ 1$  MHz \\ \hline
$ P_{0,u}$                                                                                                                   & $ 0.5$ W            & $E_{sw}$                                  & $8$                       Joules \\ \hline
$ {\vartriangle _{P,u}}$ & $1.2$            & $l$ & $ 10^4$    { bits}     \\ \hline
${{\sigma _{h_1}^2}}$ & $1 $            & ${{\sigma _{h_2}^2}}$ & $ 1 $     \\ \hline
\end{tabular}\label{Parameter_Settings}
\end{table}
\subsection{Verification of the Theoretical Results}\label{Verify}
 $\mathbb{E}\{P_{tot}\}$ and $\pi_0$ curves versus $N$ are respectively depicted in Figs. \ref{lambda02Theo_Simu} (a) and (b). To verify the exact theoretical results and  approximated closed-formulae,   simulation  results  are also presented. 
   {   Note that the exact theoretical results for the average power consumption are  achieved from \eqref{con_objective}, where  $\mathbb{E}\{ T_i \}$   
and  $\mathcal{L}_{{T}_i}(\theta)$ are used and  exactly calculated based on  \eqref{eTI}  and \eqref{eq2}, respectively.   However,  
to efficiently obtain the approximated theoretical results of the average power consumption,  $\mathbb{E}\{ T_i \}$  
and  $\mathcal{L}_{{T}_i}(\theta)$ are  respectively approximated by 
 \eqref{eTI_appro} and  \eqref{LTi_appro}.   
 Additionally,  the idle probability expression is presented in  \eqref{Pi0}. To  exactly calculate its value, $\mathcal{L}_{{T}_i}(\theta)$    as an intermediate parameter, is explicitly calculated based on   \eqref{e2q2}. As a contrast, the approximated  idle probability  is calculated  based on  \eqref{LTi_appro}.}

Additionally,    simulations    are performed by respectively averaging $\mathbb{E}\{P_{tot}\}$ and $\pi_0$ over $10^7$ realizations, including $10^4$ random realizations of  Rayleigh fading channels     {  (as the outer loop in the simulations)} and  $10^3$ consecutive FSSs and SSSs for each channel realization     {  (as the inner loop  in the simulations).}

As can be seen,  theoretical  curves match well with the simulations.  
 Furthermore, it can be noticed that Jensen's bounds in \eqref{eTI_appro} and   \eqref{LTi_appro} 
 are quite tight in the considered high SNR scenario.  
Additionally, we  observe  that, $\pi_0$ decreases with $N$,  
which   is reasonable and agrees with the intuition. The underlying reason is that a larger $N$ indicates a longer data transmission time during the sub-service cycle and hence, more packets join  the waiting queue and wait for their sub-service cycle turn. As a  consequence, there is  a higher probability  that $N_2>0$, indicating a smaller  $\pi_0$ (recall that $\pi_0={\Pr}\{N_2=0\}$). 

\begin{figure}[!htpb]
\centering
\includegraphics[width=0.5\textwidth]{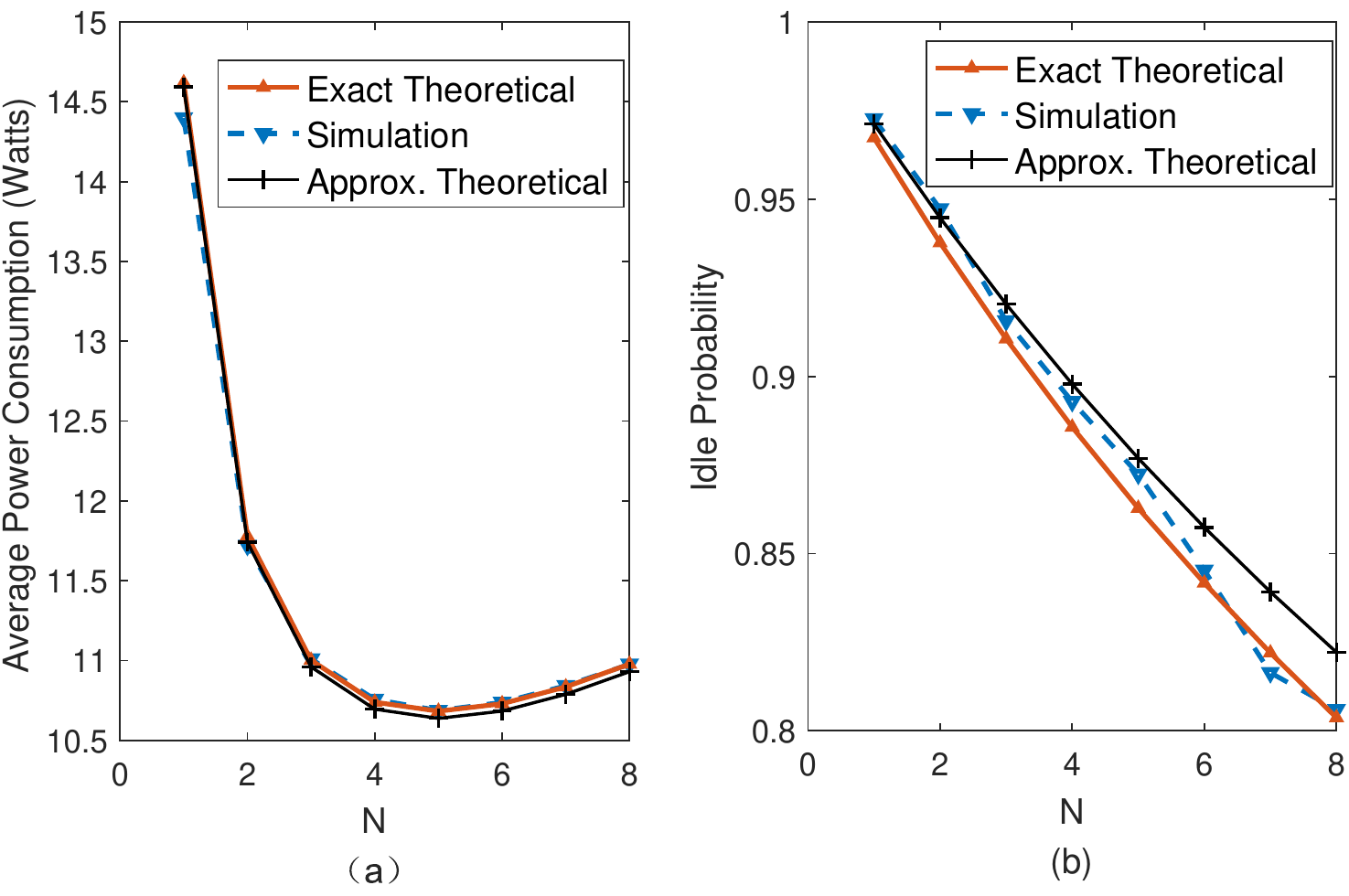}
\caption{{Verification of the theoretical derivations; $\mathbb{E}\{P_{tot}\}$ and $\pi_0$ versus $N$; $\lambda=0.2$.}}\label{lambda02Theo_Simu}
\end{figure}

It is worth noticing that though  $\displaystyle  N_0  ( {\cdot}  )$ presented in \eqref{n2on} appears to include  infinite terms;  however, in the following, we numerically  demonstrate that it converges at a fast rate  under our proposed approach. We take the convergence property of  $N_0  ( {{\cal L}_{T_2 } (\theta )}  )$ in \eqref{Pi0} and \eqref{TPIPP} as an  illustrative example. As indicated in Fig. \ref{Noz_iteration}, $N_0  ( {{\cal L}_{T_2 } (\theta )}  )$ converges when $n=7$ for various $\theta$ values. Note that $N_0  ( {{\cal L}_{T_2 } (0 )}  ) \equiv 1$, which matches with the theoretical results.  This is because $ {{\cal L}_{T_2 } (0 )}\equiv 1$ at $\theta=0$, then     $N_0  ( {{\cal L}_{T_2 } (0 )}  ) \equiv 1$ can be achieved based on the property of the PGF function.
{Additionally, the approximated closed-forms in \eqref{eTI_appro} and  \eqref{LTi_appro} are also considered. The curves  reveal  that   in  the high SNR scenario,  the approximated approach behaves quite similar as the exact formulae.}
 
\begin{figure}[!htpb]
\centering
\includegraphics[width=0.45\textwidth]{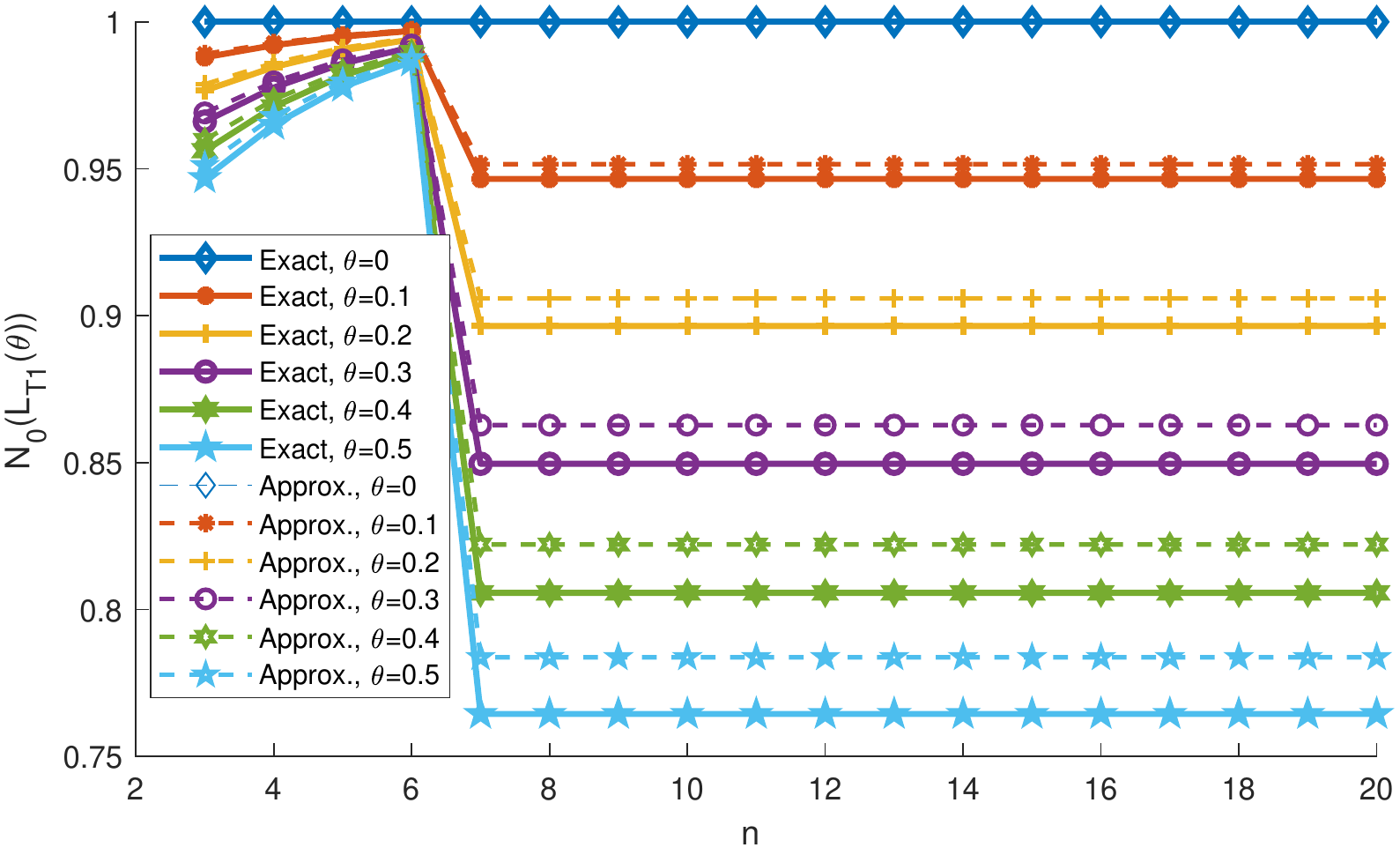}
\caption{{The convergence curves of $N_0 \Big( {{\cal L}_{T_1 } (\theta )} \Big)$ versus $n$; $\lambda=0.2$.}}\label{Noz_iteration}
\end{figure}

\subsection{Impact of $N$} \label{Numer_B}
To investigate the impacts of $N$  on $\mathbb{E}\{P_{tot}\}$ and $\mathbb{E}\{W_q\}$, we plot $\mathbb{E}\{P_{tot}\}$ and $\mathbb{E}\{W_q\}$ curved surfaces  in Figs. \ref{smallPower_3D} and   \ref{smallPower3D_waitingtime}. It can be noticed that  $\mathbb{E}\{P_{tot}\}$ first decreases with $N$ before achieving its  minimum value, and afterwards grows with $N$.  Besides, as demonstrated in Fig. \ref{smallPower3D_waitingtime},  $\mathbb{E}\{W_q\}$ monotonically increases with $N$. Therefore,   sacrificing  the mean waiting time not necessarily brings the benefit of power savings. 
The underlying reason is analyzed as follows.

\begin{figure}[!htpb]
\centering
\includegraphics[width=0.3\textwidth]{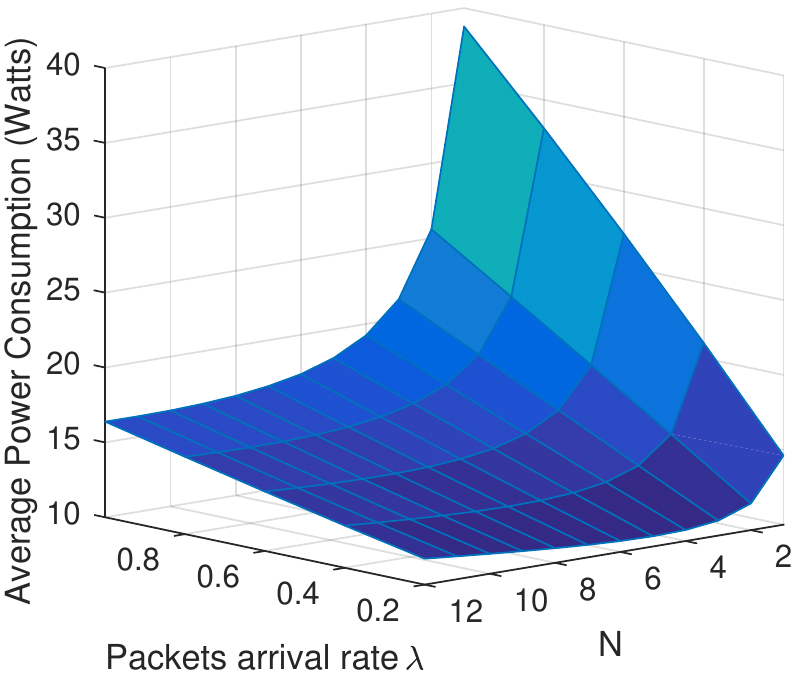}
\caption{$\mathbb{E}\{P_{tot} \}$ versus $\lambda$ and  $N$.}\label{smallPower_3D}
\end{figure} 

Firstly,  in the small  $N$  region, the mean waiting time and $\mathbb{E}\{T_{u}^{idle}\}$ are  small, which   results in a small  $\mathbb{E}_u^{tot}$   value.  
Thereby, $\mathbb{E}\{P_{tot}\}$  is dominated by the power consumed at the AP and relay server. Secondly, as illustrated in subsection  \ref{Verify} and demonstrated in Fig. \ref{lambda02Theo_Simu} (b), $\pi_0$ decreases with $N$, leading to a decreasing  mode switching frequency and   $\frac{\mathbb{E}_{sw}^{tot}}{{\mathbb{E}\{T_{RC}\}}}$. Thirdly, after some simple calculations, it can be revealed that $\frac{\mathbb{E}_R^{tot} +\mathbb{E}_{AP}^{tot}}{\mathbb{E}\{T_{RC}\}}$ in \eqref{con_objective} keeps constant \emph{w.r.t.} $N$. Based on the aforementioned analysis, we claim that the total power consumed at the AP and relay server, i.e., 
$\Big(\frac{\mathbb{E}_{sw}^{tot}}{{\mathbb{E}\{T_{RC}\}}}+\frac{\mathbb{E}_R^{tot} +\mathbb{E}_{AP}^{tot}}{\mathbb{E}\{T_{RC}\}}\Big)$ decreases with $N$. 

However, as $N$ (when $N\geq N^*$) further increases, the mean waiting time grows significantly, as demonstrated in  Fig. \ref{smallPower3D_waitingtime}. Different from the small $N$ case, $\mathbb{E}\{T_{u}^{idle}\}$ takes a larger value as indicated  in  \eqref{ETUidle}. Further, as can be observed    from   \eqref{Eu_tot3}, increasing  $\mathbb{E}\{T_{u}^{idle}\}$ results in a growing $ {\mathbb{E}_u^{tot}} $  such that  $\frac{\mathbb{E}_u^{tot}}{{\mathbb{E}\{T_{RC}\}}}$  dominates the total power consumption instead of $\frac{\mathbb{E}_{sw}^{tot}}{{\mathbb{E}\{T_{RC}\}}}$.
 The above  finally leads to an increasing $\mathbb{E}\{P_{tot} \}$.

\subsection{Impact  of   $\lambda$}

In the sequel, we study the impacts of $\lambda$ on  $\mathbb{E}\{P_{tot}\}$ and   $\mathbb{E}\{W_q\}$.  
In Figs. \ref{smallPower_3D} and   \ref{smallPower3D_waitingtime}, $\mathbb{E}\{P_{tot}\}$ and   $\mathbb{E}\{W_q\}$ versus $\lambda$ are plotted,          { where  $\lambda$ varies  from $0.2$ to $5$}.  Besides, in Fig. \ref{Performance_Comp_Alwayson} (b), the optimal $\mathbb{E}\{P_{tot}\}$  curves  versus $\lambda$ are also presented, where $\frac{p_1}{\sigma_1}=\frac{p_2}{\sigma_2}=40,\, 60$dB, respectively. We have the following two  observations.

\begin{figure}[!htpb]
\centering
\includegraphics[width=0.49\textwidth]{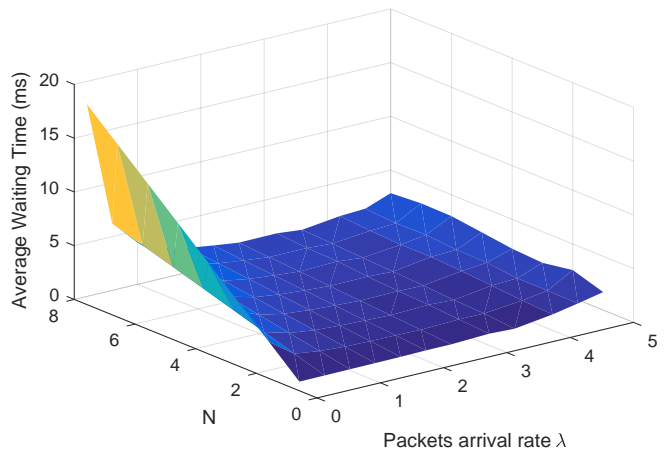}
\caption{$\mathbb{E}\{W_q\}$ surface versus $\lambda$ and  $N$.}\label{smallPower3D_waitingtime}
\end{figure} 

Firstly, the theoretical results show that $\mathbb{E}\{P_{tot}\}$ increases with $\lambda$. These  advocate  the  intuitions and further verify the validity of our derivations.    We first analyse the underlying reason that $\mathbb{E}\{P_{tot}\}$ increases with $\lambda$.  A higher arrival rate means that the relays and   AP are suffering a heavier traffic load, thereby consuming more power. Also, from the users' standpoint, more users are served per unit of time, resulting in higher power consumption.          { We further  illustrate why $\mathbb{E}\{W_q\}$  first decreases   and then increases with   $\lambda$.  As $\lambda$  increases from a small value,  the queue length can easily arrive at the threshold. Hence, the FSS starts in a shorter time, leading to a decreasing $\mathbb{E}\{W_q\}$.   As $\lambda$ further  increases, the queue length becomes larger, and  FSS and SSS last   longer. Correspondingly, it takes a longer time for the user  to be served, even though the FSS has started.}

      {  In Fig. \ref{Trdeoff}, we present  the tradeoffs between the average power consumption  $\mathbb{E}\{P_{tot}\}$ and  $D_0$ when $\lambda=0.2$,
 $0.4$, and $0.6$, respectively.  For the scheme without   optimization (or ``non-Opt" for short), $N=N_{max}$ (see \eqref{Nnotoptimal}). Note that under the   non-Opt scheme,  $\mathbb{E}\{P_{tot}\}$ first sharply decreases  
 with  $D_0$ and then gently increases.  For the optimized results,   $\mathbb{E}\{P_{tot}\}$   does not increase with   $D_0$. It means that  our scheme guarantees  that  sacrificing the delay will bring power saving gains. }
\begin{figure}[!htpb]
\centering
\includegraphics[width=0.35\textwidth]{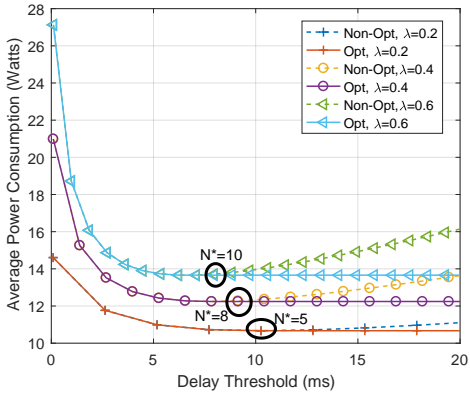}
\caption{     {Tradeoff between the power consumption and   mean waiting time; $D_0=20$ ms.}}\label{Trdeoff}
\end{figure} 

      {Additionally, from Fig. \ref{Trdeoff}, we notice  that in a denser packet arrival   scenario, the energy-saving service threshold  takes a larger value. This is because a larger $\lambda$ implies a smaller packet inter-arrival and shorter time span to be taken to accumulate a certain number of packets.  Hence, for a given  mean waiting time tolerance $D_0$, $N$ is allowed to take a larger value compared with the smaller $\lambda$ scenario. For a certain mean waiting time, according to  \eqref{Eu_tot3}, a larger $N$ does not result in a larger $\frac{\mathbb{E}_u^{tot}}{{\mathbb{E}\{T_{RC}\}}}$ than  that of the smaller $\lambda$ scenario. Meanwhile, based on the analyses in \ref{PDsection} ii) and iii), a larger $N$ leads to a lower $\frac{\mathbb{E}_R^{tot} +\mathbb{E}_{AP}^{tot}}{\mathbb{E}\{T_{RC}\}}$. To rap,   $N^*$ takes a larger value in the denser packet arrival scenario. }
 


\subsection{Performance Comparison with Always-on-Service Policy}

In Figs. \ref{Performance_Comp_Alwayson} (a) and (b), the average  power consumption of the optimum-threshold policy is compared with that of the Always-on-Service (AoS) policy \cite{JL}, where the relay server is activated once one packet arrives. It is clear that the optimal (OP) policy  performs better than the AoS policy in terms of the average  power consumption. 

Additionally, it is demonstrated that for the optimal policy, a larger maximum transmitting power (corresponding to a higher $p_i/\sigma_i$, $\forall i\in \{1,2\}$) results in   reduced average power consumption. This is due to the fact that a larger $p_i/\sigma_i$ ($\forall i\in \{1,2\}$) results in a larger transmitting rate and  further   
 a reduced service time span. Overall power consumption at the relay server and the AP  correspondingly decreases. 
 
Furthermore, numerical results obtained from the approximated closed-form  are also presented. The curves in Fig. \ref{Performance_Comp_Alwayson} (b)    reveal  that     the approximated formulae  get  closer to  the exact  theoretical derivations as  SNR increases. This      coincides with the derivations in \eqref{Jenson_Bound}-\eqref{LTi_appro}.


\begin{figure}[!htpb]
\centering
\includegraphics[width=0.45\textwidth]{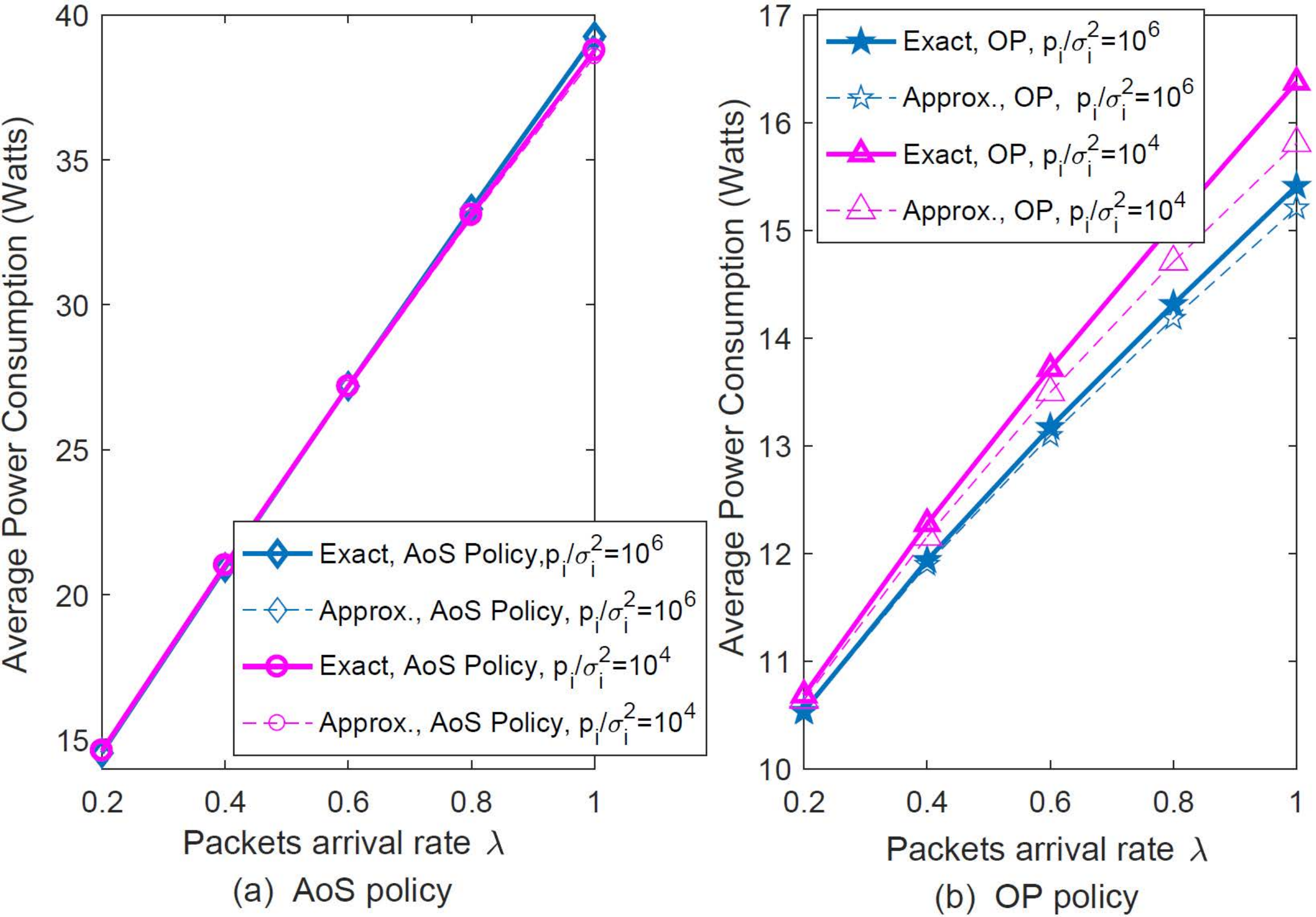}
\caption{{$\mathbb{E}\{P_{tot} \}$ versus $\lambda$ under two different policies; $D_0\leq 10.5$ ms.}}\label{Performance_Comp_Alwayson}
\end{figure}

\section{Conclusions} 

A two-stage, $N$-threshold and gated  M/G/$1$ queueing  communication  has been proposed and investigated for the delay-tolerant  networks.    
 Two important performance metrics, including the mean waiting time and  long-term expected power consumption have been   analysed and associated  with  packet arrival rate, relay service threshold as well as   channel statistics. A more practical expected power consumption was derived in the sense that it included the  electrical circuit energy consumption.    The expected power consumption minimization  problem has been formulated  under the mean delay   constraint. Mathematical approximation methods based on Jensen's inequality were adopted to obtain the tight closed-form bounds, such that computational complexity can be greatly reduced. The above studies and mathematical propositions  will provide guides for practical system design. 
The numerical results  illustrated  
the suitability of Jensen's bounds and   fast convergence rate of  closed-formulae.  The tradeoff between the mean waiting time and power consumptions are revealed. Additionally, 
it has been revealed that: $1$)   in a larger packet arrival rate scenario, the energy-saving   service threshold takes a larger value;  $2$) performance comparison with the always-on-service policy demonstrates the advantage of  our proposed scheme.  


\section {Appendix} 

   { \subsection{Derivations of  \eqref{E_Gamma} and \eqref{E_g2amma}}\label{Wald}}

   { The total time it takes the relay to serve $\Gamma$ packets in the FSSs and SSSs  are respectively $T_{1,1}  + T_{1,2}  +  \cdots  + T_{1,\Gamma } $ and $T_{2,1}  + T_{2,2}  +  \cdots  + T_{2,\Gamma } $ seconds.  Clearly,  $\Gamma$, $T_{1,i}$,  and $T_{2,i}$ 
are all random variables.  $T_{j,1},$ $ T_{j,2}$, $   \cdots$, $ T_{j,\Gamma }$ ($j=1,\, 2$) are identically distributed.  According to Wald's equation, the   following  equation  holds:
 \begin{equation}
\mathbb{E}\{ T_{i,1}  + T_{i,2}  +  \cdots  + T_{i,\Gamma } \}  = \mathbb{E}\{ \Gamma \} \mathbb{E}\{ T_i \},\; i=1,\;2. \label{dejaa}
\end{equation}} 
   { Then based on the delay cycle property in Section $1.2$ of  \cite{Book_queue}, we have  \eqref{E_Gamma}.}
   {  Further, note that 
\begin{align}
T_{RC}  =& T_0  + T_{1,1}  + T_{1,2}  +  \cdots  + T_{1,\Gamma } \nonumber\\
& + T_{2,1}  + T_{2,2}  +  \cdots  + T_{2,\Gamma }. \label{sdw123dejaa}
\end{align}
By combining \eqref{sdw123dejaa} with \eqref{dejaa}, 
we have 
\begin{align}
\mathbb{E}\{ T_{RC}\}   =&\mathbb{E}\{ T_0 \} + \mathbb{E}\{ \Gamma \} \mathbb{E}\{ T_1 \}  + \mathbb{E}\{ \Gamma \} \mathbb{E}\{ T_2 \}\nonumber\\
 \mathop = \limits^{\eqref{ET0}}&\frac{N}{\lambda} + \mathbb{E}\{ \Gamma \} \mathbb{E}\{ T_1 \}  + \mathbb{E}\{ \Gamma \} \mathbb{E}\{ T_2 \} 
 \mathop = \limits^{\eqref{E_Gamma}} \frac{\mathbb{E}\{ \Gamma \}  }{\lambda}.
\label{finalETRC}
\end{align} 
 }

\subsection{Proof of Proposition \ref{Theorem1}}\label{appendixconvexproof}

The following notations are defined:  
\begin{align}
  g_{n}(z)&=\pi _0 \eta \Big( f_n(z) \Big), n \in \mathbb{N}, \label{g_n} \\
   \eta (z) &= 1 - z^N,\label{etaz}\\
    f_0(z)&=z,\\
     f_1(z)&={{\cal L}_{T_2 } (\lambda - \lambda z)},\\
  f_{n} (z)=f_{n-1}\Big(f_1 (z)\Big)&=f_1\Big(f_{n-1} (z)\Big), \; n \in \mathbb{N}^+, \; n\geq 2 \label{iteration},
\end{align}

Under the assumption that $T_1$ and $T_2$  have the same distribution, then 
\begin{equation}
f_{T_1 } (t) = f_{T_2 } (t), \; \rm{and}\; 
{{\cal L}_{T_1 } (z)}= {{\cal L}_{T_2 } (z)} \label{t1t2}.
\end{equation}
According to \eqref{N1Z}-\eqref{N0Z}, we have 
\begin{align}
N_0 (z) \mathop = \limits^{\eqref{N0Z}}& N_0 \Big( {{\cal L}_{T_2 } (\lambda - \lambda z)} \Big)N_0 \Big( {{\cal L}_{T_1 } (\lambda -  \lambda z)\Big)}  - \pi _0 \eta (z)\label{firsteq}\\
\mathop = \limits^{(a)}&  N_0 \Big( {{\cal L}_{T_2 } (\lambda - \lambda z)} \Big)N_0 \Big( {{\cal L}_{T_2 } (\lambda -  \lambda z)\Big)}  - \pi _0 \eta (z) \label{pio}\\
\mathop = \limits^{(b)} &\{N_0 (f_1)\}^2-g_0(z), \label{iter1}\\
\mathop = \limits^{(c)} &\bigg(\bigg [{\Big( {{{\{N_0 (f_{n})\}^2 }}-g_{n  - 1}(z)} \Big)}^2  - g_{n  - 2}(z)\bigg] ^2  -  \cdots   \nonumber \\ 
&- g_{1} (z)\bigg) ^2- g_0(z), \; n \to  \infty\nonumber \\ 
\mathop = \limits^{(d)}&\eqref{n2on}, \label{finalg}
\end{align}   
where $(a)$ is achieved from \eqref{t1t2}; $(b)$ is based on  \eqref{iteration}; $(c)$ is obtained by successively substituting \eqref{iter1} into \eqref{firsteq}; $(d)$  satisfies with the fact that for any $z$ with $|z|<1$ \cite{DB}, 
\begin{equation}
f_n(z) 
 \to z_\infty 
 \equiv  1,  \; \text{as} \; n \to  \infty. \nonumber
\end{equation}
Further, we have 
\begin{equation}
N_0 (f_n) 
 \to   1, \; \rm{and}\; g _n (z)
 \to 0,\; \; \text{as} \; n \to  \infty. \nonumber
\end{equation}

We further illustrate that  $N_0 (z)$ converges absolutely  for any $|z|<1$. When $z=1$,  $f_n (1) \equiv 1\; (\forall n)$ can be obtained from the   recursion formula in  \eqref{iteration}. Further, from \eqref{g_n}, it follows that  $g_n(z) = 0\; (\forall n)$  when $z=1$. Then, it can be  observed  that $N_0 (1)=1$ from $\eqref{n2on}$.  According to the property of PGF \cite{Book_queue}, the convergence of $N_0 (z)$  for any $|z|<1$ is proved.


\subsection{Proof of Proposition  \ref{p00ai}}  \label{appendixconvexproof2} 
By substituting $z=0$ into \eqref{pio}, we have 
\begin{align}
N_0 (0)=&\bigg\{（N_0 \Big( {{\cal L}_{T_2 } (\lambda)} \Big)\bigg\}^2-\pi_0\nonumber\\
\mathop = \limits^{(i)} &   \Pr \{ N_0  = 0\}  \mathop = \limits^{(e)}  0, \label{No01}
\end{align}
where $(i)$ is based on the property of \emph{p.d.f.} function; $(e)$ satisfies with the fact that  $N_0$ is strictly positive, as indicated in \eqref{NEQ_0}. 

Hence, we have  $
 \pi _0=\bigg\{（N_0 \Big( {{\cal L}_{T_2 } (\lambda)} \Big)\bigg\}^2.$
 
\subsection{Proof of Equation \eqref{E1WQ}}  \label{wproof}  
  \begin{proof}
 For easing notations, in the following, we introduce a parameter $ f_k(\theta)$ as follows:
\begin{equation}
 f_k(\theta)= W_q (\theta {\rm{|}}\,case\,k)\cdot\Pr \{ Case\,k\}, \forall k\in\{1,2,3\}. \nonumber
\end{equation}

In the following, the proof is performed according to the property of the LST.  

For  a continuous random variable (referred to as $X$)  with cumulative distribution function $F(t)$, the moments of $X$ can be computed using \cite{qiuqiu} $
{\displaystyle \operatorname {E} [X^{n}]=(-1)^{n}\left.{\frac {{\text{d}}^{n}\{\mathcal{L}_{F}(\theta)\} }{{\text{d}}\theta^{n}}}\right|_{\theta=0},} $
where $\mathcal{L}_{F}(\theta)$  is the Laplace-Stieltjes transforms  of  $F(t)$.

when $n=1$, we have $
{\displaystyle \operatorname {E} [X ]=- \left.{\frac {{\text{d}} \{\mathcal{L}_{F}(\theta)\} }{{\text{d}}\theta }}\right|_{\theta=0}.}$

 Following the property, we have
\begin{equation}
\mathbb{E}\{W_q\} =- {\nabla} _\theta  \Big(  f_1(\theta)\cdot f_2(\theta)  \cdot f_3(\theta) \Big)\Big|_{\theta=0},\nonumber 
\end{equation}  
which can be rewritten as 
\begin{align}  
\mathbb{E}\{ W_q \}  =&  - {\nabla} _\theta  \Big(f_1 (\theta )\cdot f_2 (\theta )\cdot f_3 (\theta )\Big)\Big|_{\theta  = 0} \nonumber \\ 
  = & - {\nabla} _\theta  \Big(f_1 (\theta )\Big)\Big|_{\theta  = 0} {\rm{ }}\cdot  f_2 (0) \cdot f_3 (0)  \nonumber \\ 
& - \nabla _\theta  \Big(f_2 (0)\Big)\Big|_{\theta  = 0} {\rm{ }}\cdot f_1 (0) \cdot f_3 (0)  \nonumber  \\ 
& - {\nabla} _\theta  \Big(f_3 (0)\Big)\Big|_{\theta  = 0}  \cdot f_2 (0){\rm{ }}\cdot f_1 (0 ).\label{adssc}
 \end{align}
Since $ f_k(0)=1$, $\forall k\in\{1,2,3\}$, \eqref{adssc} can be reduced into  \eqref{E1WQ}. 
  \end{proof}
 
          {\subsection{Computational complexity analyses}\label{ccaa}}
         {
As can be observed from  \eqref{waitcase1}-\eqref{wait3}, the computational complexity of  $W_q(\theta {\rm{|}}\,case\,j)$,                                             ($j=1,\, 2,\cdots, 3$) is mainly determined by    $N_0 (\mathcal{L}_{{T}_i}(\theta)) $,  $N_0 \big( {{\cal L}_{T_i } \big(\lambda  - \lambda {\cal L}_{T_j } (\theta )\big)} \big)$,  $(i, j=1, 2)$, ${\cal L}_{T_i} (\theta )$, and $\mathbb{E}\{T'\}$.  Since the value of  ${\cal L}_{T_i} (\theta )$   can be stored and used in the subsequent calculations,  we only take  its    computational complexity  into account once.   Further, note that  $\mathbb{E}\{T'\}$ can be obtained from {$N_0 (\mathcal{L}_{{T}_i}(\theta)) $},    
the   computational complexity  of   
 $\mathbb{E}\{T'\}$ is also ignored. In the following, we first  analyse  computational complexity  for $N_0 (\mathcal{L}_{{T}_i}(\theta)) $.}
 
           {
According to the recursive derivation equations    for  $N_0 (\mathcal{L}_{{T}_i}(\theta)) $, namely \eqref{g_n}-\eqref{finalg},  we find that  the  computational complexity   for $ N_0 \big(\mathcal{L}_{{T}_i}(\theta)\big)$, denoted by   
$C_c\Big(N_0 \big(\mathcal{L}_{{T}_i}(\theta)\big)\Big) $,   is determined by  
  $C_c\Big(f_i\big(\mathcal{L}_{{T}_i}(\theta) \big)\Big)$,  $(i=0,1,..., n-1)$.  Note  that $
  C_c\Big(f_1\big(\mathcal{L}_{{T}_i}(\theta) \big)\Big)=\mathcal{O}(C_{\mathcal{L}_{{T}} })$. 
Further,  based on the recursive derivations,  we have                                                                      the general formula as $
  C_c\Big(f_n\big(\mathcal{L}_{{T}_i}(\theta) \big)\Big)=\mathcal{O}(nC_{\mathcal{L}_{{T}} }).$}
         {As a result,  
 \begin{small}
 \begin{align}
C_c\Big(N_0 \big(\mathcal{L}_{{T}_i}(\theta)\big)\Big) &=\mathcal{O}\Big((1 + 2 +  \cdots  + n) \cdot  C_{{\cal L}_T }\Big)\nonumber\\
&= {\cal O}\Big(\frac{{n(n + 1)}}{2} C_{{\cal L}_T } \Big).  \label{CCC} 
 \end{align} 
 \end{small}}                                                        
           {Further, following the similar analyses, 
  the      computational complexity  of  $N_0 \big( {{\cal L}_{T_i } \big(\lambda  - \lambda {\cal L}_{T_j } (\theta )\big)} \big)$ can be written  as follows:
 \begin{small} \begin{align}
   C_c\Big ( N_0 \big( {{\cal L}_{T_i } \big(\lambda  - \lambda {\cal L}_{T_j } (\theta )\big)}\big ) \Big ) & = {\cal O}\Big((1 + 2 +  \cdots  + n) \cdot 2C_{{\cal L}_T } \Big)\nonumber\\
&=  {\cal O}\Big({{n(n + 1)}} C_{{\cal L}_T } \Big).          
   \end{align}\end{small} }
         { 
 Following  the derivations of \eqref{CCC},  the computational complexity for $\pi _0$                                                                  can be  achieved as ${\cal O}\Big(\frac{{n(n + 1)}}{2} C_{{\cal L}_T } \Big)$.  Finally, \eqref{ctotal} can be obtained.   }  
 
\balance

\end{document}